
\typeout{IJCAI--22 Instructions for Authors}


\documentclass{article}
\pdfpagewidth=8.5in
\pdfpageheight=11in
\usepackage{ijcai22}

\usepackage{times}
\usepackage{soul}
\usepackage{url}
\usepackage[hidelinks]{hyperref}
\usepackage[utf8]{inputenc}
\usepackage[small]{caption}
\usepackage{graphicx}
\usepackage{booktabs}
\urlstyle{same}

\usepackage{amsmath,amsthm,amssymb,algpseudocode,graphicx,tikz,caption,subcaption,xcolor,xspace,cleveref,nicefrac,natbib}
\definecolor{light-gray}{gray}{0.7}

\usepackage[shortlabels]{enumitem}

\usepackage[ruled,vlined,linesnumbered]{algorithm2e}

\algrenewcommand\algorithmicindent{0.6em}
\newtheorem{theorem}{Theorem}[section]
\newtheorem{proposition}[theorem]{Proposition}
\newtheorem{lemma}[theorem]{Lemma}
\newtheorem{corollary}[theorem]{Corollary}

\algnotext{EndFor}
\algnotext{EndIf}
\algnotext{EndWhile}
\algnotext{EndProcedure}

\theoremstyle{definition}
\newtheorem{definition}[theorem]{Definition}

\usetikzlibrary{arrows.meta}
\tikzset{>={Latex[width=2mm,length=2mm]}}
\usetikzlibrary{decorations.pathmorphing,shapes,snakes}
\usetikzlibrary{calc}
\usetikzlibrary{arrows, decorations.markings}
\usetikzlibrary{positioning}

\newcommand{\cG}{\mathcal{G}}

\allowdisplaybreaks



\pdfinfo{
/TemplateVersion (IJCAI.2022.0)
}

\title{Fixing Knockout Tournaments With Seeds}

\author{
Pasin Manurangsi$^1$\And
Warut Suksompong$^2$\\
\affiliations
$^1$Google Research, USA\\
$^2$School of Computing, National University of Singapore, Singapore\\
}

\begin{document}

\maketitle

\begin{abstract}
Knockout tournaments constitute a popular format for organizing sports competitions.
While prior results have shown that it is often possible to manipulate a knockout tournament by fixing the bracket, these results ignore the prevalent aspect of player seeds, which can significantly constrain the chosen bracket.
We show that certain structural conditions that guarantee that a player can win a knockout tournament without seeds are no longer sufficient in light of seed constraints.
On the other hand, we prove that when the pairwise match outcomes are generated randomly, all players are still likely to be knockout winners under the same probability threshold with seeds as without seeds.
In addition, we investigate the complexity of deciding whether a manipulation is possible when seeds are present.
\end{abstract}

\section{Introduction}
\label{sec:intro}

Several practical scenarios involve choosing a winner from a given set of candidates based on pairwise comparisons, perhaps most prominently sports competitions.
A popular format for organizing such competitions is a \emph{knockout tournament}, also known as a \emph{single-elimination} or \emph{sudden death tournament}, wherein the players are matched according to an initial bracket and play proceeds until the tournament winner is determined.
When there are $n$ participating players, a knockout tournament requires arranging only $n-1$ matches, thereby making it an attractive choice among organizers in comparison to a round-robin tournament, for which $\Theta(n^2)$ matches are necessary.
Moreover, the fact that each player is eliminated after a single loss adds a layer of excitement and ensures that no match is meaningless in a knockout tournament. 

As efficient and as exciting as knockout tournaments are, they have a clear drawback in that their winner can depend heavily on the chosen bracket.
A significant line of work in computational social choice has therefore investigated the problem of when it is possible for the organizers to fix a bracket to help their preferred player win the tournament, given the knowledge of which player would win in any pairwise matchup \citep{VuAlSh09,Vassilevskawilliams10,StantonVa11,StantonVa11-2,ChatterjeeIbTk16,RamanujanSz17,AzizGaMa18,GuptaRoSa18,GuptaRoSa18-2,GuptaRoSa19,ManurangsiSu21}.\footnote{For an overview of this line of work, we refer to the surveys by \citet{Vassilevskawilliams16} and \citet{Suksompong21}.}
This problem is known as the \emph{tournament fixing problem (TFP)}.
While TFP is NP-complete \citep{AzizGaMa18}, several structural conditions have been shown to guarantee that a certain player can win a knockout tournament under some bracket, which can be computed efficiently.
For example, \citet{Vassilevskawilliams10} proved that if a player $x$ is a \emph{king}---meaning that for any other player~$y$ who beats $x$, there exists another player $z$ such that $x$ beats $z$ and $z$ beats $y$---and $x$ beats at least $n/2$ other players, then $x$ can win a knockout tournament.
Moreover, a number of papers have shown that fixing knockout tournaments is usually easy when the pairwise match outcomes are drawn from probability distributions \citep{Vassilevskawilliams10,StantonVa11,KimSuVa17,ManurangsiSu21}.

While previous results on TFP have shed light on the manipulability of knockout tournaments, they hinge upon a pivotal assumption that the organizers can choose an arbitrary bracket for their tournament.
In reality, the choice of bracket is often much more constrained.
In particular, many real-world tournaments assign \emph{seeds} to a subset of players in order to prevent highly-rated players from having to play each other too early in the tournament.
For instance, in ATP tennis tournaments with $32$ players, eight players are designated as seeds, and the bracket must be chosen so that the top two seeds cannot meet until the final, the top four seeds cannot meet until the semifinals, and all eight seeds cannot meet until the quarterfinals \citep[p.~139]{ATPTour22}.
As such, algorithms that do not take seeds into account may fail to generate a valid bracket for the competition.
Do prior results in this line of work continue to hold in the presence of seed constraints, or are knockout tournaments more difficult to manipulate in light of these constraints?

\subsection{Our Results}

Following most real-world tournaments, we assume that the knockout tournament is balanced, and that both the number of players $n$ and the number of seeds are powers of two.

We begin in \Cref{sec:structural} by examining structural conditions from the non-seeded setting.
Besides the ``kings who beat at least $n/2$ players'' condition that we already mentioned, another basic condition that suffices for guaranteeing that a player can win a knockout tournament is the ``superking'' condition \citep{Vassilevskawilliams10}.\footnote{See the definition in \Cref{sec:prelim}.}
We show that both of these conditions are no longer sufficient in the seeded setting.
Specifically, for any number of seeds, a king who is not one of the top two seeds may not be able to win a knockout tournament even if it can beat all other players except one.
Likewise, if there are at least four seeds, then a king who is assigned one of the top two seeds and beats all but one player may still be unable to win.
On the positive side, when there are only two seeds, a seeded king who beats at least $n/2+1$ players is guaranteed to be a winner under some bracket, and the bound $n/2 + 1$ is tight.
We also prove similar results for superkings: with two seeds, a superking can always win under some bracket, but as soon as there are at least four seeds, it may no longer be able to win a knockout tournament even if it is one of the top four seeds.
We therefore introduce a stronger condition of ``ultraking'' and show that an ultraking can win a knockout tournament for any number of seeds.

In \Cref{sec:probabilistic}, we investigate the problem of fixing knockout tournaments from a probabilistic perspective.
In particular, we assume that the pairwise outcomes are determined according to the so-called \emph{generalized random model}, where player~$i$ beats player $j$ with probability $p_{ij}$, and these probabilities may vary across different pairs $i,j$ but are always at least a given parameter $p$.
Our prior work has shown that in the non-seeded setting, as long as $p = \Omega(\log n/n)$, all players are likely to be knockout winners \citep{ManurangsiSu21}.
We strengthen that result by showing that the same holds even in the seeded setting, regardless of the number of seeds.
Combined with our findings in \Cref{sec:structural}, this strengthened result shows that even though the presence of seed constraints makes it more difficult to fix tournaments in the worst case, most of the time it does not render manipulation infeasible.

Finally, in \Cref{sec:complexity}, we address the complexity of TFP in the seeded setting.
By reducing from TFP in the non-seeded setting, we show that the problem is NP-complete, both when the number of seeds is an arbitrary constant and when it is $n/2$ (i.e., the highest possible).
On the other hand, we provide an algorithm that solves TFP in $2^n \cdot n^{O(1)}$ time, thereby generalizing a result of \citet{KimVa15} from the setting without seeds.

\section{Preliminaries}
\label{sec:prelim}

As is common in this line of work, we assume that the knockout tournament is balanced and played among $n = 2^r$ players for some positive integer $r$.
In the \emph{seeded setting}, there is a parameter $2\le s\le n$, which we also assume (following the vast majority of real-world tournaments) to be a power of two.
Among the $n$ players, $s$ of them are assigned seeds $1,2,\dots,s$.
The bracket must be set up in such a way that seeds $1$ and $2$ cannot play each other before the final, seeds $1$ through $4$ cannot play each other before the semifinals, seeds $1$ through $8$ cannot play each other before the quarterfinals, and so on.
A bracket satisfying these seed constraints is said to be \emph{valid}.
Observe that $s = n$ is equivalent to $s = n/2$, so we assume henceforth that $2\le s\le n/2$.
Notice also that larger values of $s$ only add extra constraints compared to smaller values of $s$.
We refer to the setting without seeds typically studied in previous work as the \emph{non-seeded setting}.
Unless stated otherwise, our results are for the seeded setting.
The \emph{tournament fixing problem (TFP)} asks whether there exists a valid bracket that makes our desired winner $x$ win the tournament; if the answer is positive, we refer to $x$ as a \emph{knockout winner}.

We are given a \emph{tournament graph} $T=(V,E)$, which indicates the winner of any pairwise matchup.
The vertices in~$V$ correspond to the $n$ players---we will refer to vertices and players interchangeably---and there is a directed edge in $E$ from player~$x$ to player~$y$ whenever $x$ would beat $y$ if they were to play against each other.
We use $x\succ y$ to denote an edge from $x$ to $y$.
We extend this notation to sets of vertices: for $V_1,V_2\subseteq V$, we write $V_1\succ V_2$ to mean that $x\succ y$ for all $x\in V_1$ and $y\in V_2$, and $V_1\succ y$ to mean that $x\succ y$ for all $x\in V_1$.
The \emph{outdegree} of a vertex $x$ is the number of players whom $x$ beats in $T$.
A player $x$ is said to be a \emph{king} if for any other player $y$ who beats $x$, there exists another player~$z$ who loses to $x$ but beats $y$.
Equivalently, $x$ is a king if it can reach every other player via a path of length at most two in $T$.
A player~$x$ is said to be a \emph{superking} if for any other player $y$ who beats~$x$, there exist at least $\log n$ players who lose to $x$ but beat~$y$.\footnote{Throughout the paper, $\log$ refers to the logarithm with base $2$.}
It is clear from the definitions that every superking is also a king.

When constructing a bracket, we will sometimes do so iteratively one round at a time, starting with the first round.
In each round, we pair up the players who remain in that round.
During this process, it can happen that a seeded player is beaten by an unseeded player, or that a stronger-seeded player (i.e., player with a lower-number seed) is beaten by a weaker-seeded player (i.e., player with a higher-number seed).
In such cases, we will think of the (stronger) seed as being ``transferred'' from the losing player to the winning player, since the constraints on the losing player's seed apply to the winning player in subsequent rounds.
We stress that the concept of a transfer is merely for the sake of constructing a bracket and does not imply that the seeds are actually transferred between players when the actual tournament takes place.

\section{Structural Results}
\label{sec:structural}

In this section, we examine the extent to which structural guarantees from the non-seeded setting continue to hold in the seeded setting.
\citet{Vassilevskawilliams10} proved that, without seeds, a king with outdegree at least $n/2$ is a knockout winner,\footnote{We reiterate here that we use the term ``knockout winner'' to refer to a player who can win a knockout tournament \emph{under some bracket}.} and the same holds for a superking.
As we will see, these conditions are largely insufficient for guaranteeing that a player can win in the presence of seeds.

We assume throughout this section that our desired winner~$x$ is a king.
Denote by $A$ the set of players whom $x$ beats, and $B = V\setminus(A\cup\{x\})$ the set of players who beat $x$.
A structure that will be used multiple times in this section is a ``special tournament graph'', defined as follows.

\begin{definition}
\label{def:special}
A tournament graph $T$ is said to be a \emph{special tournament graph} if there exists $y\in A$ such that $y\succ B$ and $B\succ (A\setminus\{y\})$.
The edges within each of $A$ and $B$ can be oriented arbitrarily.
An illustration is shown in \Cref{fig:special-tournament}.
\end{definition}

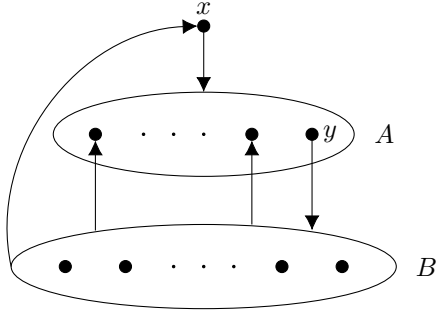
\begin{figure}[!ht]
\centering
\begin{tikzpicture}[scale=0.8]
\draw[fill=black] (6,8) circle  [radius=0.1];
\draw[fill=black] (4.2,6.2) circle  [radius=0.1];
\draw[fill=black] (6.8,6.2) circle  [radius=0.1];
\draw[fill=black] (7.8,6.2) circle  [radius=0.1];
\draw[fill=black] (5,6.2) circle  [radius=0.02];
\draw[fill=black] (5.5,6.2) circle  [radius=0.02];
\draw[fill=black] (6,6.2) circle  [radius=0.02];
\draw (6,6.2) ellipse (2.5cm and 0.7cm);
\draw[fill=black] (3.7,4) circle  [radius=0.1];
\draw[fill=black] (4.7,4) circle  [radius=0.1];
\draw[fill=black] (7.3,4) circle  [radius=0.1];
\draw[fill=black] (8.3,4) circle  [radius=0.1];
\draw[fill=black] (5.5,4) circle  [radius=0.02];
\draw[fill=black] (6,4) circle  [radius=0.02];
\draw[fill=black] (6.5,4) circle  [radius=0.02];
\draw (6,4) ellipse (3.2cm and 0.7cm);
\draw[->] (6,8) to (6,6.9);
\draw[->] (2.8,4) to[bend left=50] (5.9,8);
\draw[->] (7.8,6.2) to (7.8,4.6);
\draw[->] (6.8,4.7) to (6.8,6.1);
\draw[->] (4.2,4.6) to (4.2,6.1);
\node at (9,6.2) {$A$};
\node at (9.7,4) {$B$};
\node at (6,8.3) {$x$};
\node at (8.1,6.2) {$y$};
\end{tikzpicture}
\caption{Illustration of a special tournament graph}
\label{fig:special-tournament}
\end{figure}

When we refer to a special tournament graph, we will use the notation $x,y,A,B$ as in \Cref{fig:special-tournament}.

The results of this section are summarized in \Cref{table:summary}.

\renewcommand{\arraystretch}{1.3}
\begin{table*}[!ht]
\centering
    \begin{tabular}{| c | c |}
    \hline
     \textbf{Conditions} & \textbf{Knockout winner guarantee} \\ \hline
     Any $n\ge 4$, any $s$, not one of the top two seeds, outdegree $n-2$ & No (\Cref{prop:king-not-seeded}) \\ \hline
     $s=2$, any $n$, seeded, outdegree $ \ge n/2 + 1$ & Yes (\Cref{thm:king-seeded-positive})  \\ \hline
     $s=2$, any $n\ge 4$, seeded, outdegree $n/2$ & No (\Cref{thm:king-seeded-negative})   \\ \hline
     Any $4\le s\le n/2$, one of the top two seeds, outdegree $n-2$ & No (\Cref{prop:king-seeded-negative}) \\ \hline
     $s=2$, any $n$, superking & Yes (\Cref{thm:superking-positive}) \\ \hline
     Any $4\le s\le n/2$, one of the top four seeds, superking & No (\Cref{prop:superking-negative}) \\ \hline
     Any $n$ and $s$, ultraking & Yes (\Cref{thm:ultraking-positive}) \\
    \hline
    \end{tabular}
    \caption{Summary of our results in \Cref{sec:structural} on whether each set of conditions is sufficient to guarantee that a king can win a knockout tournament under some bracket.}
    \label{table:summary}
\end{table*}

\subsection{Kings of High Outdegree}

We begin by observing that in the seeded setting, a king~$x$ may not be able to win a knockout tournament even if it beats $n-2$ other players.
This draws a sharp contrast to the non-seeded setting, where beating $n/2$ other players already suffices \citep{Vassilevskawilliams10}.
Note that the bound $n-2$ is also tight: if $x$ beats $n-1$ players, then it beats every player and can trivially win.

\begin{proposition}
\label{prop:king-not-seeded}
For any $n\ge 4$ and any $s$, a king with outdegree $n-2$ who is not one of the top two seeds is not necessarily a knockout winner.
\end{proposition}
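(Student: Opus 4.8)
The plan is to exhibit a single \emph{special tournament graph} (in the sense of \Cref{def:special}) with $|B|=1$, paired with a seed assignment that puts the lone player who beats $x$ onto the top seed, and then show that this player reaches the final no matter how the bracket is drawn.

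Concretely, let $B=\{b\}$, so that $A=V\setminus\{x,b\}$ has size $n-2$. Pick any $y\in A$ and orient the edges so that $x\succ A$, $b\succ x$, $y\succ b$, and $b\succ(A\setminus\{y\})$, with the edges inside $A$ oriented arbitrarily; this is a special tournament graph. Assign seed $1$ to $b$ and seed $2$ to $y$, distribute any remaining seeds $3,\dots,s$ over distinct players of $V\setminus\{x,b,y\}$ (possible since $|V\setminus\{x,b,y\}|=n-3\ge s-2$), and leave $x$ unseeded. Then $x$ is a king (it beats every player of $A$ directly and reaches $b$ via $x\succ y\succ b$), its outdegree is $|A|=n-2$, and it is not one of the top two seeds.

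It remains to check that $x$ wins no valid bracket. In any valid bracket, seeds $1$ and $2$ — that is, $b$ and $y$ — sit in opposite halves. Every player other than $b$ and $y$ belongs to $(A\setminus\{y\})\cup\{x\}$ and hence loses to $b$, so within $b$'s half the player $b$ beats each opponent it is paired with and advances all the way to the final. The final therefore pits $b$ against the winner of $y$'s half, and since $b\succ x$, player $x$ cannot be the knockout winner, whether or not $x$ is the one facing $b$ in the final.

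I do not foresee a real obstacle here; the argument uses only the king property, the fact that everyone except $y$ loses to $b$, and — on the seeding side — that seeds $1$ and $2$ always occupy different halves, which is a constraint present for every $s\ge 2$, so the same construction and proof work for all $s$ simultaneously. The only points to state carefully are the degenerate case $n=4$ (where $|A\setminus\{y\}|=1$ and necessarily $s=2$), which the construction still covers, and the observation that the proof never appeals to the constraints involving seeds $3,\dots,s$, which is what makes it insensitive to the number of seeds.
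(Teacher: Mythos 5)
Your construction is exactly the paper's: a special tournament graph with $|B|=1$, the lone member of $B$ (your $b$, the paper's $z$) and $y$ assigned the top two seeds, so that $b$ necessarily reaches the final and beats $x$ there. The argument is correct and matches the paper's proof, with your additional (harmless) bookkeeping about the placement of seeds $3,\dots,s$ and the case $n=4$.
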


\begin{proof}
Consider a special tournament graph with $|A| = n-2$, and let $z$ be the unique player in $B$.
Assume that $y$ and $z$ are the top two seeds.
Since $y$ is the only player who can beat $z$, and these two players cannot meet until the final, $z$ makes the final in every valid bracket.
Thus, even if the king $x$ makes the final, it will be beaten by $z$ there, which means that $x$ is not a knockout winner.
\end{proof}

The construction in the proof above relies on the assumption that $x$ is not one of the top two seeds.
We show next that if there are only two seeds and $x$ is one of them, then it can win a knockout tournament as long as it has outdegree at least $n/2 + 1$.
This means that the seed constraint does not make manipulation much harder in this case.
To establish this result, we employ a similar algorithm as that of \citet{Vassilevskawilliams10} for the ``kings who beat at least $n/2$ players'' guarantee, but our analysis is more involved due to the seed constraint.

\begin{theorem}
\label{thm:king-seeded-positive}
For $s = 2$ and any $n$, a seeded king $x$ with outdegree at least $n/2 + 1$ is always a knockout winner.
Moreover, there exists a polynomial-time algorithm that computes a valid winning bracket for $x$.
\end{theorem}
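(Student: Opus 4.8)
The plan is to reduce the problem to building the two half-brackets. Since $s=2$, a valid bracket is nothing more than a partition of the $n$ players into two halves $L$ and $R$ of size $n/2$, each equipped with an arbitrary internal sub-bracket, subject only to the requirement that seeds~$1$ and~$2$ lie in different halves (the further ``top-$2^k$-seeds'' constraints are vacuous with only two seeds, and a transferred seed never leaves its own half). Assume without loss of generality that $x$ carries seed~$1$ and let $w$ be the player with seed~$2$; then $x$ goes into one half, say $L$, and $w$ into the other half $R$. Since $x$ beats the winner of $R$ whenever that winner lies in $A$, it suffices to choose $L$, $R$ and their sub-brackets so that (i)~$x$ wins $L$ and (ii)~the winner of $R$ lies in $A$; in the final, $x$ then beats $R$'s winner. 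Recall that $|A|\ge n/2+1$, so $|B|\le n/2-2$.

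For (i), I would reuse the mechanism behind \citet{Vassilevskawilliams10}'s ``king of outdegree at least $n/2$'' guarantee inside $L$: in the first round of $L$, pair every player of $B\cap L$ with a player of $A$ that beats it, and pair the remaining players of $L$ (all in $A$, one of them being $x$'s first-round opponent) arbitrarily; then every first-round survivor of $L$ other than $x$ lies in $A$, and $x$, beating all of them, wins the remaining rounds of $L$. The complication — already present without seeds — is that such a first-round matching of $B$-players into dominators need not exist; one instead takes a maximum such matching, observes that the unmatched $B$-players necessarily beat $x$ and every still-unmatched player of $A$, and lets those players play one another over the next few rounds (on the side of $L$ away from $x$) until they are exhausted, exactly as in the special-tournament situation. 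For (ii) I would run the analogous construction inside $R$ to eliminate every $B$-player placed there, so that $R$'s winner is forced into $A$; here $w$ simply occupies one slot of $R$ — if $w\in A$ it is just another $A$-player, and if $w\in B$ it is one of $R$'s $B$-players and is paired in round~$1$ with one of its dominators (which is legitimate since, with $s=2$, no seeding constraint governs the subtree of $R$).

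The genuinely new ingredient is the distribution of players across the two halves so that each half has the resources to run the above procedure. I would split $B$ into $B_L$ and $B_R$ of size at most $n/4-1$ each — possible precisely because $|B|\le n/2-2$, which is where the ``$+1$'' in the outdegree hypothesis is spent (and is consistent with the tightness asserted in \Cref{thm:king-seeded-negative}) — keep each $B$-player in the same half as the players that will be used to knock it out, force $w$ into $R$ (and into $B_R$ when $w\in B$), and fill the remainder of each half with the leftover players of $A$; a short counting check, using that the two halves have the correct sizes and parities, shows that the $n-1-|B|=|A|$ players of $A$ fit exactly and that every required first-round pairing exists. The bracket so produced is valid and makes $x$ the champion, and it is computed by bipartite matching plus straightforward bookkeeping, hence in polynomial time.

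The step I expect to be the main obstacle is making the \citet{Vassilevskawilliams10}-style argument actually go through inside each half \emph{under the partition constraint}: after $B$ has been split and the dominators allocated, one must still verify that neither half is left holding a cluster of mutually dominating $B$-players that outlasts its supply of dominators, and that pinning seed~$2$ to $R$ (especially when $w\in B$) does not break the count — all of this while having only the slack $|B|\le n/2-2$ at one's disposal rather than the $|B|\le n/2-1$ available in the non-seeded setting. The rest is routine.
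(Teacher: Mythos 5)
Your reduction to two halves is sound for $s=2$, and you have correctly located the danger zone, but the concrete plan for defusing it --- splitting $B$ into $B_L$ and $B_R$ of size at most $n/4-1$ each while keeping every $B$-player in the same half as its designated dominators --- cannot be carried out in exactly the tight case. Take the special tournament graph of \Cref{def:special} with $|A|=n/2+1$: a single player $y\in A$ beats all of $B$, and every player of $B$ beats $x$ and every player of $A\setminus\{y\}$. Here $y$ is the \emph{only} possible dominator of any $B$-player, and $y$ lives in only one half; any $B$-player placed in the other half beats everyone else there, so that half's winner comes from $B$ (a set of players that beats its complement always produces the section winner), which eliminates $x$ either inside that half or in the final. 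Hence all of $B$, of size $n/2-2$, must be packed into $y$'s half --- far exceeding your $n/4-1$ budget --- and moreover $x$ must go to the \emph{other} half, since inside $\{x,y\}\cup B$ the king $x$ beats only $y$ and cannot survive $\log n-1$ rounds. One is then forced to argue separately that $y$ wins the half $\{y\}\cup B\cup\{\text{one further $A$-player}\}$ by absorbing one $B$-player per round while the rest of $B$ cannibalizes itself; this is a genuinely different argument from ``king of outdegree at least half,'' and it is precisely what is missing from your sketch. Your closing ``the rest is routine'' is therefore not justified: the case your counting breaks on is the extremal one that makes the bound $n/2+1$ tight (cf.\ \Cref{thm:king-seeded-negative} and \Cref{lem:special-negative}).

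The paper takes a different route that builds this hard case into the induction from the outset: it constructs the bracket round by round (a maximum matching from $A$ to $B$ in each round, $x$ paired with an unseeded $A$-player, leftovers paired carefully) and proves \emph{two} statements simultaneously by induction on $n$ --- the theorem itself, together with the statement that in a special tournament graph with $|A|=n/2$ and the second seed lying in $B\cup\{y\}$, the seeded king still wins. Whenever the maximum matching degenerates to size $1$ (your problematic configuration), the algorithm steers the second seed so that the residual instance satisfies the hypotheses of that second statement, and the induction closes. If you wish to keep your one-shot half-partition framework, you would need to branch on how concentrated the domination of $B$ is and, in the concentrated regime, replace the balanced split by ``all of $B$ goes with $y$'' plus a direct argument that $y$ wins that half; at that point you will essentially have reconstructed the paper's auxiliary statement.
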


\begin{proof}
The case $n=2$ holds vacuously, so assume that $n\ge 4$.
Since $s = 2$, the only constraint is that the two seeds cannot meet before the final.
We will simultaneously prove the following two statements by induction on $n$:
\begin{enumerate}[(a)]
\item A seeded king $x$ with outdegree at least $n/2 + 1$ is always a knockout winner.
\item For a special tournament graph with seeded king $x$ and $|A| = n/2$, if the other seed belongs to $B\cup\{y\}$, then $x$ is a knockout winner.
\end{enumerate}
Note that even though only statement (a) is needed for this theorem, a proof by induction using statement (a) alone does not work, and we also need statement (b) in order for the induction to go through.

We first handle the base case $n = 4$.
Statement (a) holds trivially because $x$ beats all other players.
For statement (b), denote by $z$ the unique player in $B$ and by $w$ the player in~$A$ besides $y$.
By the assumption of the statement, $w$ is not a seed.
We let $x$ play $w$ and $y$ play $z$ in the first round, so that $x$ beats $y$ in the final and wins the tournament.

We proceed to the inductive step.
Assume that the statements hold for $n/2\ge 4$; we will prove both of them for~$n$ at the same time.
Consider the following algorithm.
In the first round, we find a maximum matching $M$ from $A$ to $B$ (in the underlying directed graph between $A$ and $B$) and pair up players according to $M$.
Note that the matching $M$ is always nonempty, and let $A_M$ and $B_M$ be the players from $A$ and $B$ in the matching, respectively.
Among the remaining players, we match $x$ with an unseeded player in $A$, match players within $A$ arbitrarily, match players within $B$ arbitrarily, and finally, if necessary, match the leftover player in $B$ with the leftover player in $A$.
Denote by $A'\subseteq A$ and $B'\subseteq B$ the players in $A$ and $B$ who remain after this round, respectively.
We consider three cases.
\begin{itemize}
\item \underline{Case 1}: $|A| \ge n/2 + 2$.
We have $|B| \le n/2 - 3$, so after matching players according to $M$, there are still at least five players left in $A$.
This means that we can match $x$ with an unseeded player in $A$.
Since $M$ has size at least~$1$ and we match one player in $A$ with $x$, $|A'| \ge n/4 + 1$.
Moreover, every player $z\in B\setminus B_M$ beats every player in $A\setminus A_M$ (otherwise the matching can be enlarged), so since $x$ is a king, $z$ must lose to at least one player in $A_M$.
This implies that $x$ is still a king in the remainder tournament.
By the inductive hypothesis for statement~(a), there exists a winning bracket for $x$ in the remainder tournament, and therefore $x$ can also win the original tournament.

\item \underline{Case 2}: $|A| = n/2 + 1$.
We have $|B| = n/2 - 2$, so by a similar reasoning as in Case~1, we can match $x$ with an unseeded player in $A$.
If $M$ has size at least $2$, then $|A'| \ge n/4 + 1$ and we are done as in Case~1.
Assume therefore that $M$ has size $1$, which means that the tournament graph is a special tournament graph, with a player $y\in A$ beating all players in $B$.
In this special case, we add an extra constraint to the algorithm: if the seed besides $x$ belongs to $A\setminus \{y\}$, then we leave this seed to be paired with a player in $B$ in the final step of the algorithm, so that the seed is transferred to $B$ for the next round (since $|A|$ is odd, the final step of the algorithm takes place).
This ensures that in the remainder tournament, which also has a special tournament graph, the seed other than $x$ belongs to $B'\cup\{y\}$.
Since $|A'| = n/4$, we are done by the inductive hypothesis for statement~(b).

\item \underline{Case 3}: $|A| = n/2$, the tournament graph is a special tournament graph, and the seed besides $x$ belongs to $B\cup\{y\}$.
In this case, the maximum matching~$M$ has size $1$.
After applying the algorithm for the first round, the remainder tournament still has a special tournament graph, $|A'| = n/4$, and the seed other than $x$ belongs to $B'\cup\{y\}$.
Hence, we are done by the inductive hypothesis for statement~(b).
\end{itemize}

The three cases together complete the induction.
Since constructing each round of the bracket takes polynomial time and the number of rounds is logarithmic, the overall algorithm runs in polynomial time.
\end{proof}

Interestingly, our next result establishes the tightness of the bound $n/2 + 1$ in \Cref{thm:king-seeded-positive}---this provides a separation from the $n/2$ bound in the non-seeded setting.
To this end, we will need the following lemma.
The case $|A| = n/2 - 1$ of the lemma was proven as Claim~2 in the work of \citet{Vassilevskawilliams10}, and the same proof applies to the more general condition $|A| \le n/2-1$ that we state below.

\begin{lemma}[\citep{Vassilevskawilliams10}]
\label{lem:special-negative}
In the non-seeded setting, for a special tournament graph with king $x$ and $|A| \le n/2 - 1$, $x$ is not a knockout winner.
\end{lemma}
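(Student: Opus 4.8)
The plan is to establish the contrapositive by contradiction: assume some bracket makes $x$ the knockout winner, and derive a contradiction from the hypothesis, which is equivalent to $|B| \ge n/2$ (since $|A| + |B| = n-1$). The only feature of the special structure I will need is the following: because $y \succ B$, $B \succ (A\setminus\{y\})$, and $B \succ x$ (the last because $B$ is exactly the set of players who beat $x$), the only players in the whole tournament who can beat a given member of $B$ are $y$ and the other members of $B$. Equivalently, whenever a player of $B$ is eliminated, the winner of that match is either $y$ or another player of $B$.

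Next I would locate $y$ in the bracket. Since $x$ wins the tournament and $x\neq y$, player $y$ loses some match, say in round $j$ with $1\le j\le \log n$; let $S^*$ be the sub-bracket (a subtree with $2^{j-1}$ leaves) that $y$ wins right up until that loss. Two easy structural facts about $S^*$ will be used: (i) every match won by $y$ is against a player lying in $S^*$; and (ii) since $y$ is the unique player who emerges from $S^*$, every player of $S^*\setminus\{y\}$ only ever plays opponents inside $S^*$. Note also that $|S^*| = 2^{j-1}\le n/2$.

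The heart of the argument is the claim that $B\subseteq S^*\setminus\{y\}$, which immediately gives $|B|\le 2^{j-1}-1\le n/2-1$, contradicting $|B|\ge n/2$. To prove the claim, suppose some $p\in B$ has $p\notin S^*$, and let $q$ be the player who eliminates $p$. By the structural fact above, $q=y$ or $q\in B\setminus\{p\}$. If $q=y$, then fact (i) forces $p\in S^*$, a contradiction; hence $q\in B\setminus\{p\}$, and since $q$ beats the outside player $p$, fact (ii) forces $q\notin S^*$. Iterating with $q$ in place of $p$ produces an infinite sequence $p=p_0,p_1,p_2,\dots$ of members of $B\setminus S^*$ in which $p_t$ is eliminated by $p_{t+1}$ for every $t$; because the round in which a player is eliminated strictly increases along this sequence, the $p_t$ are pairwise distinct, contradicting the finiteness of $B$.

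I expect the only delicate part to be the bookkeeping around $S^*$: justifying facts (i) and (ii) carefully from the recursive structure of the bracket, and making sure the descending chain genuinely cannot terminate. The one termination scenario to rule out is that the chain ends because the last remaining $B$-player is eliminated by $y$ — but fact (i) precludes this, since that player lies outside $S^*$ while every opponent of $y$ lies inside it.
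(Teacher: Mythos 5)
Your argument is correct: the observation that a member of $B$ can only be eliminated by $y$ or by another member of $B$, combined with the chain argument forcing all of $B$ into the sub-bracket of at most $n/2$ leaves that $y$ wins before its own elimination, yields $|B|\le n/2-1$, contradicting $|B|=n-1-|A|\ge n/2$. The paper does not spell out a proof but defers to Claim~2 of \citet{Vassilevskawilliams10}, and your argument is essentially that same confinement-to-$y$'s-subtree argument, so this matches the intended proof.
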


\begin{theorem}
\label{thm:king-seeded-negative}
For $s = 2$ and any $n\ge 4$, a seeded king with outdegree $n/2$ is not necessarily a knockout winner.
\end{theorem}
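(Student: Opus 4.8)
The plan is to build an explicit counterexample, the key idea being that forcing a second seed into the picture costs $x$'s half one ``safe'' player from $A$. I would start from a special tournament graph (\Cref{def:special}) with $|A| = n/2$, so that $x$ has outdegree exactly $n/2$, $|B| = n/2-1$, and $x$ is a king (witnessed by $y$, since $y \succ B$). I then pick any player $v \in A\setminus\{y\}$ --- possible since $|A| = n/2 \ge 2$ for $n \ge 4$ --- and make $x$ and $v$ the two seeds. In any valid bracket, $x$ and $v$ must then fall in opposite halves $H_1 \ni x$ and $H_2 \ni v$ of size $n/2$.

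The technical core is a monotonicity claim, proved by induction on the size of a sub-bracket: any sub-bracket whose players all lie in $\{x\}\cup(A\setminus\{y\})\cup B$ and which contains at least one player of $B$ is won by a player of $B$. In the inductive step, split the sub-bracket into its two halves; if both contain a player of $B$, their winners lie in $B$ by induction and the final is a $B$ vs.\ $B$ match won by a player of $B$; if only one half contains a player of $B$, its winner lies in $B$, the other winner lies in $\{x\}\cup(A\setminus\{y\})$, and the final is won by the $B$ player because $B\succ\{x\}\cup(A\setminus\{y\})$.

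Now assume toward a contradiction that $x$ wins under some valid bracket, and case on the location of $y$. If $y\in H_2$, then $H_1$ contains neither $y$ nor $v$, so all players of $H_1$ lie in $\{x\}\cup(A\setminus\{y,v\})\cup B$; were $H_1$ to contain a player of $B$, the claim would force the winner of $H_1$ into $B$, contradicting that $x$ wins $H_1$ --- hence $H_1\subseteq\{x\}\cup(A\setminus\{y,v\})$, which is impossible because $|H_1| = n/2$ while $|\{x\}\cup(A\setminus\{y,v\})| = n/2-1$. If instead $y\in H_1$, then $H_2$ contains neither $x$ nor $y$, and since $x$ must beat the winner of $H_2$ in the final, that winner must lie in $A$; but all players of $H_2$ lie in $(A\setminus\{y\})\cup B$, so by the claim $H_2$ contains no player of $B$, forcing $H_2\subseteq A\setminus\{y\}$, again impossible since $|H_2| = n/2 > n/2-1 = |A\setminus\{y\}|$. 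Either way we reach a contradiction, so $x$ is not a knockout winner.

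The monotonicity claim is routine (it is, in essence, the mechanism behind \Cref{lem:special-negative}), so the step I would be most careful about is checking that the case analysis on $y$ is exhaustive and that the counting is exactly tight: the inequalities rely on $|A| = n/2$ and would fail to produce a contradiction at $|A| = n/2+1$, which is precisely consistent with \Cref{thm:king-seeded-positive}.
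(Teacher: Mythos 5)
Your proposal is correct, and it uses the same counterexample as the paper (a special tournament graph with $|A| = n/2$ and the second seed placed in $A\setminus\{y\}$), but the proof that this counterexample works is genuinely different. The paper argues by induction on $n$ over the rounds of the tournament: it analyzes how the first round can be played, shows that either the special structure together with the seed placement is preserved in a half-size instance (so the inductive hypothesis applies), or else at least $n/4$ players of $B$ survive and the externally imported \Cref{lem:special-negative} finishes the job. You instead give a single global argument: the seed constraint forces $x$ and the second seed $v$ into opposite halves, a short dominance claim (provable by induction on sub-bracket size, and indeed the same mechanism underlying \Cref{lem:special-negative}) shows that any sub-bracket drawn from $\{x\}\cup(A\setminus\{y\})\cup B$ that touches $B$ is won by $B$, and then a counting argument on whichever half fails to contain $y$ yields a contradiction in both cases. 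Your route is self-contained (it does not invoke \Cref{lem:special-negative} as a black box) and arguably more transparent about \emph{why} the construction works---the half containing $x$ but not $y$ has only $n/2-1$ admissible players---while the paper's round-by-round induction is the style reused in its other arguments and composes directly with the cited claim of Vassilevska Williams. Your closing sanity check is also apt: the counting is tight at $|A|=n/2$ and breaks at $n/2+1$, consistent with \Cref{thm:king-seeded-positive}.
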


\begin{proof}
Consider a special tournament graph with $|A| = n/2$, and assume that the other seed $w\ne x$ belongs to $A\setminus \{y\}$.
We prove by induction on $n$ that $x$ is not a knockout winner for this tournament graph.
For the base case $n = 4$, denote by $z$ the unique player in $B$.
Since $x$ cannot play $w$ in the first round, it must play $y$ in order to progress to the final.
However, since $z$ beats $w$, $x$ will play $z$ in the final and lose.

We proceed to the inductive step.
Assume that the statement holds for $n/2\ge 4$; we will prove it for $n$.
In order for $x$ to have a chance of winning the tournament, we must match it with a player from $A$.
This player must be different from~$y$; otherwise no player outside $B$ can eliminate players in $B$.
If we match $y$ with a player in $B$ and pair up the remaining $n/2 - 2$ players in $B$, then $n/4 - 1$ players from~$B$ proceed to the next round, the tournament graph is still a special tournament graph, and the seeds are still $x$ and a player in $A\setminus\{y\}$, so we may apply the inductive hypothesis.
Otherwise, at least $n/4$ players from $B$ proceed to the next round.
If $y$ does not proceed, no player outside $B$ can eliminate players in $B$ and $x$ cannot win, so we may assume that $y$ proceeds.
In this case, the tournament graph is again a special tournament graph.
Since $|B| \ge n/4$, we have $|A| \le n/4 - 1$, so \Cref{lem:special-negative} implies that $x$ is not a knockout winner.
This completes the induction and therefore the proof.
\end{proof}

When $s \ge 4$, we know from \Cref{prop:king-not-seeded} that if the king~$x$ is not one of the top two seeds, then it may not be able to win even if its outdegree is $n-2$.
A similar construction shows that the same also holds when $x$ \emph{is} one of the top two seeds, which means that manipulation with $s \ge 4$ is more difficult than with $s = 2$.

\begin{proposition}
\label{prop:king-seeded-negative}
For any $4\le s\le n/2$, a king $x$ with outdegree $n-2$ is not necessarily a knockout winner even if $x$ is one of the top two seeds.
\end{proposition}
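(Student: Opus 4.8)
The plan is to reuse the idea behind \Cref{prop:king-not-seeded}: exhibit a special tournament graph with $|A| = n-2$ in which the unique player $z \in B$ is impossible to eliminate before the round where it meets $x$, so that if $x$ reaches that round it loses, and otherwise $x$ loses earlier. The new twist is that now $x$ is allowed to be one of the top two seeds, so we can no longer use seeds $1$ and $2$ for $y$ and $z$. Instead, with $s \ge 4$ available we assign seed $1$ (or $2$) to $x$, and assign seeds so that $y$ and $z$ are forced to stay on the ``same side'' of the bracket as each other while being kept away from $x$ for as long as the seed constraints allow, and crucially so that they cannot meet each other too early either. The concrete choice I would make: let $x$ be seed $1$, and let $y$ and $z$ be, say, seeds $3$ and $4$ (this uses $s \ge 4$). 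The constraint that seeds $1$ through $4$ cannot meet before the semifinals means $x$, $y$, $z$ each sit in a different one of the four quarter-bracket ``quadrants,'' and in particular $y$ and $z$ are in different quadrants; the constraint that seeds $1,2$ meet only in the final forces $x$ and the seed-$2$ player into opposite halves, but $y$ and $z$ may be split across the two halves or share a half — the key point is simply that $y$ and $z$ do not meet before the semifinal at the earliest.

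The argument then runs as follows. Since $z$'s only in-neighbor is $y$ (because $B = \{z\}$ and in a special tournament graph $y \succ B$ while $B \succ A\setminus\{y\}$ and $z \succ x$), the only player who can ever eliminate $z$ is $y$. But the seed assignment guarantees $y$ and $z$ cannot meet before the semifinals. Hence $z$ survives until at least the semifinal round in every valid bracket. Meanwhile $x$, being seed $1$ and in a quadrant disjoint from both $y$'s and $z$'s, also survives (at best) until the semifinals on its own side. In the semifinal, either $x$ faces $z$ directly — and loses, since $z \succ x$ — or $x$ faces $y$ and some player (necessarily $y$, since $y$ would beat everyone it could plausibly reach from $A\setminus\{y\}$... more precisely, whoever emerges from $y$'s quadrant; if it is $y$ itself, $x$ beats $y$, but then in the final $x$ meets whoever survived from the two quadrants containing $z$, which is $z$, and loses). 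In every branch $x$ meets $z$ no later than the final and is eliminated there. Either way $x$ is not a knockout winner. One should also confirm $x$ is a king: $x$ beats all of $A$, and the only player beating $x$ is $z$, which loses to $y \in A$, so $x$ reaches $z$ via the length-two path $x \succ y \succ z$.

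The main obstacle is bookkeeping the seed constraints so that $y$ and $z$ are simultaneously (i) kept away from $x$ long enough and (ii) kept away from each other long enough — i.e.\ checking that no valid bracket lets $y$ knock out $z$ before $x$ has already been eliminated. For general $4 \le s \le n/2$ one must pick the seeds of $y$ and $z$ carefully: using seeds $3$ and $4$ works uniformly for all $s \ge 4$ since the ``seeds $1$–$4$ separated until semifinals'' rule is present whenever $s \ge 4$, and higher-numbered seed rules only add constraints without relaxing this one. I would also need a small-$n$ sanity check (the construction needs $n \ge 8$ for four distinct quadrants to make sense when $s\ge 4$; but $s \ge 4$ and $s \le n/2$ already force $n \ge 8$, so this is automatic). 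A clean way to present the final contradiction is: in any valid bracket, consider the subtournament played among the three quadrants not containing the seed-$2$ player together with... — actually simpler to just track $z$: $z$ loses only to $y$, $y$ cannot reach $z$ before the semifinal, so $z$ is in the semifinal; $x$ can reach the final only by winning its semifinal, and its final opponent comes from the half not containing $x$, which is exactly the half (or the merge of quadrants) that still contains the undefeated $z$ — hence $x$'s final opponent is $z$ and $x$ loses. If instead $x$ loses its semifinal it is done even sooner. That dichotomy is the whole proof once the seed placement is pinned down.
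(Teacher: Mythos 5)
Your construction is correct, but it differs from the paper's in both the tournament graph and the seed assignment. The paper also starts from the special tournament graph with $|A|=n-2$ and $B=\{z\}$, but it additionally stipulates that $y$ loses to every other player in $A$, makes $x$ and $z$ the top two seeds, and makes $y$ the third seed; since $n\ge 8$, $y$ cannot meet $z$ in the first round, so $y$ (whose only win is over $z$) is eliminated immediately, after which $z$ is unbeatable. You instead leave the special tournament graph unmodified, seed $x$ first and $y,z$ third and fourth, and argue that since $z$'s unique in-neighbour $y$ is separated from $z$ until at least the semifinal, $z$ survives long enough that $x$ must face it in the semifinal or final and lose. Your case analysis on whether $z$ shares a half with $x$ or with seed $2$ is sound (in the latter case $z$ wins its half outright because $y$ sits in $x$'s half), and the observation that $4\le s\le n/2$ forces $n\ge 8$ covers the small cases. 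What each approach buys: the paper's extra edge specification collapses the bracket analysis to a one-line ``$y$ dies in round one'' argument at the cost of a slightly more contrived graph, whereas your version keeps the graph minimal but requires tracking which half $z$ lands in. One presentational gap worth closing: state explicitly that for $s>4$ the remaining seeds are assigned to arbitrary players of $A$ and that the additional constraints only further restrict the set of valid brackets, so the impossibility argument carries over verbatim.
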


\begin{proof}
We use the construction in the proof of \Cref{prop:king-not-seeded}, but with the extra specification that $y$ loses to all other players in $A$.
The top two seeds are $x$ and $z$, and $y$ is the third seed.
Since $n\ge 8$, $y$ cannot be matched with $z$ in the first round, so it gets eliminated as it cannot beat any other player.
But then no player can eliminate $z$, which means that $x$ cannot win the tournament.
\end{proof}

\subsection{Superkings}

Next, we consider the superking condition---recall that a superking is always a knockout winner in the non-seeded setting \citep{Vassilevskawilliams10}.
We prove that if there are only two seeds, then a superking can win the tournament regardless of whether it is seeded.
This stands in contrast to \Cref{prop:king-not-seeded}, which shows that a king may not be able to win for any $s$ even if it has outdegree $n-2$.

\begin{theorem}
\label{thm:superking-positive}
For $s = 2$ and any $n$, a superking $x$ is always a knockout winner.
Moreover, there exists a polynomial-time algorithm that computes a valid winning bracket for $x$.
\end{theorem}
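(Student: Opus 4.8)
The plan is to follow the round-by-round strategy of \Cref{thm:king-seeded-positive} (itself an adaptation of Williams' superking argument) and to induct on $n$. What makes $s=2$ tractable is that its single constraint---seeds $1$ and $2$ may not meet before the final---is \emph{self-reducing}: after one round, using the transfer convention for seeds, one is left with a tournament on $n/2$ players that again carries exactly two seeds subject to the identical constraint. Hence it suffices to build a single valid first round for which (i) $x$ beats its opponent, (ii) the two seeds are not paired against each other, and (iii) $x$ is again a superking among the $n/2$ survivors. Point (iii) should go through because the superking threshold drops from $\log n$ to $\log(n/2)$ as the field halves, leaving exactly one unit of slack per round; for this reason I expect a single-statement induction to suffice here, in contrast to the two-statement induction needed for \Cref{thm:king-seeded-positive}.

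For (iii): if $B=\emptyset$ then $x$ beats everyone and any valid bracket works, so assume $B\neq\emptyset$, whence $|A|\ge\log n\ge 1$. Let $x$ play some $a^*\in A$ and take an inclusion-maximal matching $M$ between $B$ and $A\setminus\{a^*\}$ in which every matched $y\in B$ is paired with one of its witnesses (a player in $A$ beating $y$), then pair the remaining players within $A$, within $B$, and across if parity forces it. In each edge of $M$ the $B$-endpoint is eliminated and its witness survives; by maximality of $M$, every surviving $y\in B$ has all of its witnesses inside $A_M\cup\{a^*\}$, and since the players of $A_M$ all survive and $a^*$ does not, $y$ retains at least $(\text{number of witnesses of }y)-1\ge\log n-1=\log(n/2)$ surviving witnesses in $A$. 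So $x$ is a superking among the survivors and the induction proceeds; the base case $n=2$ is immediate, since a superking on two players must beat the other one.

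It remains to respect the seed constraint in the first round, which I expect to be the main technical hurdle. If $x$ is seeded, we only need $a^*$ to be distinct from the other seed, which is possible since $|A|\ge\log n\ge 2$ once $n\ge 4$ (the degenerate subcase $n=4$ with $|A|=2$ can be checked directly); if the other seed lies in $B$, it is matched against a non-seeded player, and whichever of the two wins, two distinct players carry the two seeds into the next round. If $x$ is unseeded, there are two designated players $w_1,w_2$ that must land in different first-round matches; I would handle this by choosing the matching with care---never pairing $w_1$ with $w_2$ in the steps that pair players within $A$, within $B$, or across, and, in the delicate case where $w_1\in A$ is itself a witness of $w_2\in B$, simply setting $a^*:=w_1$ so that this particular witness is the one spent against $x$ (so seed $1$ transfers to $x$, reducing to the easier seeded case). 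Each such adjustment leaves the argument of the previous paragraph intact, since it only requires $a^*$ to be an arbitrary element of $A$ and leaves the rest of the pairing essentially free. Finally, each round is built from one maximal-matching computation and there are $\log n$ rounds, so a valid winning bracket is produced in polynomial time.
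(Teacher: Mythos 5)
Your overall strategy is the same as the paper's: run the non-seeded superking algorithm round by round, observe that the $s=2$ constraint is self-reducing under seed transfer, and verify that (i) $x$ remains a superking (your witness-counting argument via maximality of $M$ is correct, and correctly uses the one unit of slack $\log n \to \log(n/2)$) and (ii) the two seeds can be kept apart in each round. The case where $x$ is seeded is handled adequately.

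There is, however, a genuine gap in the unseeded-$x$ case, and it sits exactly at the spot you flag as ``the main technical hurdle.'' You assert that one can ``never pair $w_1$ with $w_2$'' by choosing the within-$A$, within-$B$, and across pairings with care, but this is not always achievable by rearranging those pairings alone. The problematic configuration is: both seeds lie in $B$, neither is used in $M$, and $M$ has exhausted $A\setminus\{a^*\}$, leaving $w_1$ and $w_2$ as the \emph{only} two unmatched players --- they are then forced to meet in the first round no matter how the residual pairing is arranged. (This can occur, e.g., with $|A|=3$, $|B|=4$, $|M|=2$.) Your only explicit adjustment --- setting $a^*:=w_1$ when $w_1\in A$ witnesses $w_2\in B$ --- does not apply here, since both seeds are in $B$. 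Escaping this configuration requires modifying $M$ itself: because $x$ is a superking, $w_1$ loses to at least $\log n\ge 2$ players in $A$, hence to at least one player $a\in A_M$; rematching $a$ with $w_1$ (and letting $a$'s former partner pair with $w_2$) yields a matching of the same cardinality, so maximality and your superking-preservation argument are unaffected. This is the one place in the proof where the superking hypothesis is needed for something other than preservation, and it is missing from your argument. A related, smaller omission: when one seed is in $A$ and the other in $B$ but the former does \emph{not} witness the latter, the across-pairing can still force them together; the uniform fix (also simpler than your case split) is to always let $x$ play a seed whenever one lies in $A$.
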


\begin{proof}
The case $n = 2$ is trivial, so assume that $n\ge 4$.
First, recall the superking algorithm of \citet{Vassilevskawilliams10} in the non-seeded setting: Match the superking $x$ with an arbitrary player $w\in A$, find a maximum matching $M$ from $A\setminus\{w\}$ to $B$ and pair up players according to $M$, and match the remaining players arbitrarily.
It can be shown that these first-round matchups ensure that $x$ remains a superking in the next round, so we can proceed recursively.

To demonstrate that this algorithm can be applied in the seeded setting, it suffices to show that in each round before the final, we can avoid pairing the two seeds.\footnote{As mentioned in \Cref{sec:prelim}, seeds can be ``transferred'' as the tournament proceeds; this does not affect our argument.}
If $x$ is one of the seeds, then since the superking condition implies that $|A| \ge 2$ in every round before the final, we can match $x$ with an unseeded player $w\in A$ in the first step of the algorithm.
Assume now that $x$ is not one of the seeds.
If at least one seed is in $A$, we can let $x$ play against that seed.
Suppose therefore that both seeds are in $B$.
If at least one of them is used in $M$, we are done.
Otherwise, we try to avoid pairing the two seeds when matching the remaining players.
The only problematic case is when the maximum matching has already exhausted $A$ and left exactly the two seeds in $B$.
In this case, since each of the two seeds loses to at least $\log n \ge 2$ players in $A$ by the superking condition, we can modify one pair in $M$ to involve one of the seeds, and we are again done.

It is clear that this algorithm can be implemented in polynomial time.
\end{proof}

On the other hand, when there are at least four seeds, the superking condition is no longer sufficient.

\begin{proposition}
\label{prop:superking-negative}
For any $4\le s\le n/2$, a superking is not necessarily a knockout winner even if it is one of the top four seeds.
\end{proposition}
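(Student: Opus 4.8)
The plan is to reproduce the mechanism behind \Cref{prop:king-not-seeded,thm:king-seeded-negative}: plant a player $z$ with $z \succ x$ that, in \emph{every} valid bracket, survives all the way to the final, so that even if $x$ reaches the final it is knocked out there. The new complication relative to the king setting is that the superking condition forces $z$ to have at least $\log n$ in-neighbours inside $A$ (the set of players $x$ beats), and the construction must guarantee that none of these in-neighbours can ever be placed or routed so as to eliminate $z$. I would first settle the case $s=n/2$ (which in particular handles $s=4$ via $n=8$ and, more generally, yields a counterexample for every admissible value of $s\ge 4$ by taking $n=2s$), and then treat $4\le s<n/2$ separately.

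For $s=n/2$, let $x$ be seed~$1$ and let $z$ be an \emph{unseeded} player with $z\succ x$ and $x\succ V\setminus\{x,z\}$. Choose a set $C$ of exactly $\log n$ unseeded players, distinct from $z$, set $C\succ z$ and $z\succ \bigl(V\setminus(C\cup\{z\})\bigr)$, and make every $c\in C$ as weak as possible, namely $c$ beats only $z$. Assign seeds $2,\dots,s$ to arbitrary players outside $\{x,z\}\cup C$; this is possible because $n/2-1\le n-2-\log n$ for $n\ge 8$, and edges left unspecified are oriented arbitrarily. Then the only in-neighbour of $x$ is $z$, and the in-neighbours of $z$ lying in $A$ are exactly the $\log n$ players of $C$, so $x$ is a superking and is among the top four seeds. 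The key point is that when $s=n/2$ every first-round match of a valid bracket pairs one seeded with one unseeded player: hence $z$'s first-round opponent is a seed, which $z$ beats (all seeds lie outside $C$), so $z$ advances, while each $c\in C$ faces a seed, which beats it, so all of $C$ is eliminated in round~$1$. After round~$1$ no one beating $z$ remains, so $z$ wins its half and reaches the final; if $z$ lies in $x$'s half then $x$ is eliminated by $z$ before the final, and otherwise $x$ meets $z$ in the final and loses. Either way $x$ is not a knockout winner.

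For a general admissible pair with $4\le s<n/2$ the same template applies, but more work is needed because some first-round matches now contain no seed, so a careless placement would let the organizer pair $z$ directly with a member of $C$. The plan here is to keep several planted beaters of $x$ and to use the seed-separation constraints to provably keep their in-neighbours out of reach: make $z$ (and, if necessary, a small family of ``$z$-like'' players) seeded, choose $C$ and the remaining seeds so that each in-neighbour of each such $z$ is either too weak to survive its forced first-round match or is kept apart from $z$ by the quarter/half constraints, and then argue by induction on $n$ — in the spirit of the proof of \Cref{thm:king-seeded-negative}, and invoking \Cref{lem:special-negative} at the base of the recursion — that no valid bracket can eliminate all the planted beaters of $x$. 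I expect this inductive verification, which must track which seeds have been transferred and which special structure persists into the next round while ruling out every way the organizer might route a weak beater to its target, to be the main technical obstacle; the first-round match structure for intermediate values of $s$ is the subtle ingredient that makes it delicate.
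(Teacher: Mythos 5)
Your $s=n/2$ construction is correct and is a genuinely different (and rather clean) argument for that one case: since every first-round match must pair a seeded with an unseeded player, the unseeded $z$ necessarily meets a seed it beats while every $c\in C$ meets a seed that beats it, so $z$ survives to win the tournament. However, the proposition (as the summary table and the parallel statements such as \Cref{prop:king-not-seeded} make clear) asserts a counterexample for \emph{every} pair $(n,s)$ with $4\le s\le n/2$, and the reduction between values of $s$ runs against your plan: adding seeds only removes valid brackets, so a counterexample for some $s$ extends to all $s'\ge s$ for the \emph{same} $n$ (seed the extra players arbitrarily), whereas a counterexample for $s=n/2$ says nothing about $s=4$ with that $n$. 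The essential case is therefore $s=4$ for every $n\ge 8$, which is precisely the case you leave as an unexecuted plan. Taking $n=2s$ covers only one value of $n$ per $s$.

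Moreover, the mechanism you propose for $4\le s<n/2$ --- protecting a beater $z$ of $x$ --- appears unworkable there, not merely technically delicate. With $s=4$ the only constraints concern the four seeded players, so the organizer may always pair $z$ with an unseeded member of $C$ in round one (legal whether or not $z$ is seeded), eliminating $z$ immediately, after which $x$ beats everyone; and you cannot seed all of $C$ since $|C|\ge\log n>3$. The quarter/half separation constraints never prevent a first-round meeting between a seeded player and an unseeded one, so there is no way to keep $C$ ``out of reach'' of $z$. The paper's proof uses a different mechanism that attacks $x$ directly rather than protecting one of its beaters: take $|A|=\log n$ exactly, with every member of $A$ beating every member of $B$ (so $x$ is a superking), and make $x$ together with three members of $A$ the top four seeds. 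Then in the $\log n-2$ rounds before the semifinal, $x$ may not meet those three seeds and hence has only $\log n-3$ permissible opponents it can beat, so it cannot reach the semifinal. That single construction works verbatim for every $4\le s\le n/2$ and every $n\ge 8$. To repair your proposal you would need either this idea or some other mechanism for small $s$; the induction in the spirit of \Cref{thm:king-seeded-negative} and \Cref{lem:special-negative} that you sketch does not close this hole.
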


\begin{proof}
Consider a tournament graph in which the superking~$x$ beats exactly $\log n$ players in $A$, all of whom in turn beat the remaining $n-1-\log n$ players in $B$.
Notice that $|A| = \log n \ge 3$, and assume that $x$ and three players in $A$ are the top four seeds.
In the first $\log n - 2$ rounds of the tournament (i.e., before the semifinals), $x$ cannot play these three seeds, so it can only play the other $\log n - 3$ players whom it beats.
As a result, $x$ cannot progress to the semifinals.
\end{proof}

In light of \Cref{prop:superking-negative}, we introduce the following strengthening of a superking, where we replace the parameter $\log n$ in its definition by $n/2$.

\begin{definition}
\label{def:ultraking}
A player $x$ is said to be an \emph{ultraking} if for any other player $y$ who beats $x$, there exist at least $n/2$ players who lose to $x$ but beat $y$.
\end{definition}

Our next theorem shows that the ultraking condition is sufficient to guarantee that a player can win a knockout tournament in the seeded setting, regardless of the number of seeds.

\begin{theorem}
\label{thm:ultraking-positive}
For any $n$ and $s$, an ultraking $x$ is always a knockout winner.
Moreover, there exists a polynomial-time algorithm that computes a valid winning bracket for $x$.
\end{theorem}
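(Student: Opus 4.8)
The plan is to build a valid winning bracket essentially in one stroke: use the first round to eliminate \emph{every} player who beats $x$, and then coast. Write $A$ for the set of players $x$ beats and $B$ for the set of players who beat $x$; the ultraking condition says each $y\in B$ loses to at least $n/2$ players of $A$, which in particular forces $|A|\ge n/2$ and hence $|B|\le n/2-1$ (if $B=\emptyset$ then $x$ beats everyone and any valid bracket works, so assume $B\neq\emptyset$). The structural observation driving everything is: if in the first round every player of $B$ is paired against -- hence eliminated by -- a distinct dominating player of $A$, $x$ is paired against some player of $A$, and the leftover players of $A$ are paired among themselves, then every survivor other than $x$ lies in $A$, so $x$ beats all of them. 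From that point $x$ wins no matter how the bracket is completed, so it only remains to complete it to a \emph{valid} bracket.

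First I would show that such a first-round pairing exists and can be made to respect the seed constraints. Since $s\le n/2$, the only first-round seed constraint is that no two seeded players are paired. Because each $y\in B$ has at least $n/2\ge |B|+1$ in-neighbours in $A$, there is enough slack to invoke Hall's theorem: I match $B$ into $A$ so that each $y\in B$ is sent to a player of $A$ beating it, and (when $y$ is seeded) to an \emph{unseeded} such player. Checking Hall's condition for this bipartite graph, one splits a candidate set $S\subseteq B$ into its seeded and unseeded parts; the unseeded part alone already has $\ge n/2>|B|$ allowed neighbours, while for an all-seeded $S$ the estimate $|N(S)\cap A_{u}|\ge |N(S)|-|A_{s}|\ge n/2-|A_{s}|\ge |S|$ follows from $|A_{s}|\le s-|S|\le n/2-|S|$. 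Beforehand I reserve $x$'s opponent: if $x$ is seeded I take an unseeded player of $A$ (one exists, since $x$ seeded forces $|A_{u}|\ge |A|-(s-1)\ge 1$), and if $x$ is unseeded I take a seeded player of $A$ when one is available (this keeps the Hall bound from losing a unit when that vertex is deleted). Finally the leftover players of $A$, whose seed count is small enough by the same arithmetic, can be paired among themselves with at most one seed per pair.

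Next I would argue the resulting first round extends to a valid $n$-player bracket. After round one there are $n/2$ survivors, the (at most) $s$ seed labels sit on distinct survivors after the usual transfers to round-one winners, and $s\le n/2$; a valid bracket on these $n/2$ slots always exists, and placing it on top of the first round gives a valid bracket on all $n$ players -- the deepest constraint (seeds $1,\dots,n/2$ lie in distinct first-round pairs) is precisely what we enforced directly, and the remaining, shallower constraints of the $n$-player bracket translate exactly into the constraints of a valid bracket on the $n/2$ slots. Since $x$ beats every survivor, $x$ wins this bracket. Every step -- a maximum matching / Hall check, and constructing a valid seeded bracket -- runs in polynomial time, and there are $O(\log n)$ rounds, giving the claimed polynomial-time algorithm.

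The step I expect to be the main obstacle is the bookkeeping of the second paragraph: getting a \emph{single} first-round pairing to simultaneously (i) wipe out $B$, (ii) hand $x$ an opponent from $A$, and (iii) avoid pairing two seeds, especially when $x$ itself is seeded so its opponent is forced unseeded, and when making sure the \emph{leftover} $A$-players can still be paired legally. The point is that the ultraking threshold ``$\ge n/2$'' (rather than the superking's ``$\ge\log n$'') is exactly the quantity of slack that keeps Hall's condition alive under all of these extra demands, so the crux is verifying that this slack is never exhausted in the worst arrangement of the seeds.
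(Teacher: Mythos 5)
Your overall strategy is exactly the paper's: use the ultraking's $n/2$-strong in-neighbourhoods to wipe out all of $B$ in the first round, pair $x$ with a member of $A$, and observe that $x$ then beats every survivor regardless of how the rest of the bracket is arranged. However, there is a genuine gap in your second paragraph, in the step ``the leftover players of $A$, whose seed count is small enough by the same arithmetic, can be paired among themselves with at most one seed per pair.'' This is not guaranteed by the matching rules you impose, because you only constrain the opponents of \emph{seeded} players of $B$ (they must get unseeded opponents) and of $x$; you place no requirement that \emph{unseeded} players of $B$ be paired against \emph{seeded} players of $A$. Concretely, take $s=n/2$ with $x$ unseeded and $B=\{y\}$ where $y$ is also unseeded, so that all $n/2$ seeds lie in $A$. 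Your algorithm pairs $x$ with a seeded player of $A$ and is free to pair $y$ with an unseeded player of $A$ beating it; the leftover $n-4$ players of $A$ then contain $n/2-1$ seeded but only $n/2-3$ unseeded players, so some first-round pair must contain two seeds and the bracket is invalid. In general, with your rules the leftovers can carry up to $s$ seeds while only $n-2-2|B|$ slots remain, and $s\le (n-2-2|B|)/2$ fails already for $s=n/2$ and $|B|\ge 1$.

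The missing ingredient is the symmetric counting argument: each unseeded player of $B$ must also be matched against a \emph{seeded} player of $A$ to whom it loses, and the ultraking bound makes this possible --- a player of $B$ loses to at least $n/2$ players of $A$, of which at most (total unseeded players in $A$) are unseeded, leaving enough seeded in-dominators to greedily absorb the unseeded part of $B$. The paper makes this clean by first reducing to $s=n/2$ (larger $s$ only adds constraints), so that \emph{every} first-round pair must consist of one seeded and one unseeded player; it then matches seeded $B$-players to unseeded $A$-players and unseeded $B$-players to seeded $A$-players, after which the leftover seeded/unseeded counts in $A\cup\{x\}$ are automatically balanced. Your Hall-type verification of the first half of the matching is fine, but without the second half (or the reduction to $s=n/2$ that forces it) the construction does not go through.
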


\begin{proof}
The case $n=2$ is trivial, so assume that $n\ge 4$.
Since larger values of $s$ only add extra constraints, it suffices to consider $s = n/2$, i.e., half of the players are seeded and the other half are unseeded.

Denote by $k$ the number of seeded players in $B$; there are at most $n/2 - k$ seeded players in $A$.
Since each player in~$B$ loses to at least $n/2$ players in $A$, it loses to at least $k$ unseeded players in $A$.
This means that we can greedily match each seeded player in $B$ with an unseeded player in $A$ to whom it loses.
Analogously, we can also match each unseeded player in $B$ with a seeded player in~$A$ to whom it loses.
The remaining matches (between players in $A$ and the ultraking $x$) can be chosen arbitrarily so that each seeded player is matched to an unseeded player---this is possible since there are an equal number of seeded and unseeded players.
This ensures that after the first round, only $x$ and players in $A$ remain.
Hence, no matter which (valid) bracket we choose from the second round onward, $x$ is the tournament winner.

It is clear that this algorithm can be implemented in polynomial time.
\end{proof}

We remark that \Cref{thm:ultraking-positive} would no longer hold if the parameter $n/2$ in \Cref{def:ultraking} were reduced to $n/2 - 1$.
Indeed, if $s = n/2$ and the ultraking $x$ is seeded and beats the other $n/2 - 1$ seeded players in $A$, all of whom beat the $n/2$ unseeded players in $B$, then $x$ cannot even win its first-round match in any valid bracket.

\section{Probabilistic Results}
\label{sec:probabilistic}

In this section, we investigate the problem of fixing tournaments from a probabilistic perspective.
We work with the so-called \emph{generalized random model} \citep{KimSuVa17,SaileSu20}. 
In this model, we are given a parameter $p \in [0, 1/2]$, and every pair of distinct $i,j\in\{1,\dots,n\}$ is assigned a real number $p_{ij} \in [p, 1 - p]$, where $p_{ij} = 1 - p_{ji}$ for all $i\ne j$.
Assuming that the players are $x_1,\dots,x_n$, the tournament graph~$T$ is then generated as follows: for each pair $i \ne j$, player~$x_i$ beats player~$x_j$ with probability $p_{ij}$.

Our main result of this section is stated below.
Following standard terminology, ``with high probability'' means that the probability converges to $1$ as $n\rightarrow\infty$.

\begin{theorem} \label{thm:random-main}
Let $s = n/2$ and $p \geq 160\ln n/n$.
With high probability, all players are knockout winners, and a winning bracket for each player can be computed in polynomial time.
\end{theorem}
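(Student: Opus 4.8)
The plan is to fix an arbitrary target player~$x$ and an arbitrary seed assignment, show that $x$ is a knockout winner except with probability at most, say, $n^{-2}$, and then take a union bound over the $n$ choices of~$x$. By the observation in \Cref{sec:prelim} we may assume $s=n/2$, since larger values of $s$ only add constraints and $s=n/2$ is the most restrictive case. The first thing to record is the combinatorial shape of a valid bracket when $s=n/2$: no two of the $n/2$ seeds may meet in the first round, so the first round is a perfect matching between the $n/2$ seeded and the $n/2$ unseeded players; iterating (and treating a seed as transferred to the winner of each match), the $2^{r-k+1}$ players surviving into round~$k$ carry distinct seeds $1,\dots,2^{r-k+1}$, and they split into the carriers of seeds $1,\dots,2^{r-k}$ and the carriers of seeds $2^{r-k}+1,\dots,2^{r-k+1}$, which form two equal-size ``constraint classes'' that must be matched across in round~$k$. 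Crucially, we retain freedom both in the cross-class matching chosen in each round and, indirectly, in which surviving player carries which seed, since for each seed $i$ we may choose the opponent of the current carrier of~$i$.

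The core of the argument is to build a winning bracket one round at a time, lifting the round-by-round construction behind the non-seeded result of \citet{ManurangsiSu21}: in each round one matches $x$ with one of its out-neighbours among the surviving players, and matches every surviving player that beats~$x$ with one of its conquerors among the survivors, scheduling the latter so that no player beating~$x$ is ever left without a surviving conqueror; one then recurses on the surviving sub-tournament, and once the survivors are all beaten by~$x$ any valid completion of the bracket makes $x$ the champion (as in \Cref{thm:ultraking-positive}). The probabilistic inputs needed are modest and are exactly those one can extract without a union bound over all subsets: with $p\ge 160\ln n/n$, every player beats $\Omega(\log n)$ players and loses to $\Omega(\log n)$ players (Chernoff plus a union bound over the $n$ players, with failure probability $n^{-\Omega(1)}$), and the edge counts between the polynomially many \emph{pre-committed} large sets arising in the construction concentrate around their expectations; to keep the number of such estimates polynomial one reveals the edges of $T$ adaptively (deferred decisions), round by round. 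Since each round reduces to one bipartite-matching computation and there are $O(\log n)$ rounds, the resulting algorithm runs in polynomial time.

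The main obstacle is reconciling this construction with the seed constraint, which — unlike in the non-seeded setting — forces every round's matching to run across the two equal-size constraint classes. We must therefore produce, in each round, a cross-class matching that simultaneously keeps $x$ winning, eliminates enough players that beat~$x$, and leaves the class partition of the survivors balanced for the next round. The plan is to carry a strengthened invariant stating that $x$'s surviving out-neighbours are split roughly evenly between the two classes and that each class still contains enough of them, together with enough conquerors of the survivors that beat~$x$; maintaining this invariant is where we spend the freedom noted above to route which surviving player carries which seed. A secondary but genuine difficulty is that the concentration estimates are meaningful only while the relevant survivor and class sizes exceed $\Theta(1/p)=\Theta(n/\log n)$, so for the later rounds — when the survivor set has shrunk but may still contain many players beating~$x$ — the argument cannot draw on fresh randomness and must instead rely on structure already secured in the earlier rounds (a survivor set in which $x$'s out-neighbours are plentiful and the conquerors of the remaining players beating~$x$ have been deliberately preserved), exactly as in the base case of the non-seeded recursion. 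Threading the cross-class constraint through this elimination schedule, while keeping the two classes balanced, is the delicate part of the proof.
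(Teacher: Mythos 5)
Your structural analysis of valid brackets with $s=n/2$ is correct (round one is a perfect matching between seeded and unseeded players, and each subsequent round must match carriers of the top half of the surviving seeds against carriers of the bottom half), but the proposal has a genuine gap: the step you yourself flag as ``the delicate part'' --- maintaining, round by round, a cross-class matching that simultaneously keeps $x$ alive, eliminates the players who beat $x$, and keeps the two constraint classes balanced --- is exactly the content of the theorem, and you give no concrete invariant, no argument that it can be maintained, and no probabilistic estimate showing that the required cross-class matchings exist. In particular, nothing in the proposal rules out the situation where, after a few rounds, the surviving conquerors of some player $y$ with $y\succ x$ all end up in the same constraint class as $y$, at which point no valid match can eliminate $y$; controlling which survivor carries which seed gives some freedom, but converting that freedom into a provable invariant is the whole difficulty, and you explicitly leave it unresolved. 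The endgame is likewise unaddressed: you note that concentration fails once the survivor set has size $o(n/\log n)$ and that one must ``rely on structure already secured'' in earlier rounds, but you never specify what that structure is or how the earlier rounds secure it.

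The paper avoids this round-by-round entanglement with a different decomposition: it first finds, via the non-seeded result of \citet{ManurangsiSu21}, a winning bracket for $x$ among the $n/2$ \emph{unseeded} players alone, and then inserts the $n/2$ seeds into the first round so that every seed occupies a position consistent with the seed constraints \emph{and} loses its first-round match to an unseeded player. After round one only unseeded players remain, so the seed constraints become vacuous and the prefabricated non-seeded bracket takes over. The only new probabilistic work is showing that, level by level, the bipartite graph between the seeds $2^{k-1}+1,\dots,2^k$ and the $2^{k-1}$ still-unassigned level-$k$ sections (with an edge when the seed loses to some unseeded player in the section) admits a perfect matching; this requires a sharpened version of the Erd\H{o}s--R\'enyi perfect-matching theorem (\Cref{prop:er-matching}), since at high levels there are many sections each containing few players. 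A separate small modification (\Cref{lem:non-seed-mod}) handles the case where $x$ itself is seeded. To complete your proof you would need to supply the missing invariant and its maintenance argument together with the accompanying concentration bounds; alternatively, adopting the ``eliminate every seed in round one'' reduction collapses the problem to the non-seeded case, at the price of the perfect-matching analysis above.
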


As mentioned in \Cref{sec:prelim}, $s = n/2$ gives rise to the most restrictive set of constraints, so \Cref{thm:random-main} implies the same result for all other values of $s$.
The theorem also strengthens our previous result in the non-seeded setting \citep{ManurangsiSu21}.\footnote{Modulo the constant factor, which is twice as large as our previous one.}
Moreover, it entails similar results for random models that are special cases of the generalized random model, including the ``Condorcet random model'' and the ``uniform random model''.\footnote{See, e.g., our prior work \citep{ManurangsiSu21} for the definitions. We remark here that the parameter $p$ is present in the Condorcet random model but not in the uniform random model.}
The bound $p = \Omega(\log n/n)$ is tight even in the non-seeded setting: if $p = o(\log n/n)$ and $p_{ij} = p$ for all $i > j$, then the weakest player is expected to beat fewer than $\log n$ players, which is insufficient because a knockout tournament consists of $\log n$ rounds.

Before we proceed to the proof of \Cref{thm:random-main}, we provide here a brief sketch.
For ease of exposition, we assume that our desired winner $x$ is unseeded. 
To begin with, we use our prior result from the non-seeded setting \citep{ManurangsiSu21} to find a winning bracket~$B$ for $x$ among the $n/2$ unseeded players. 
We then extend this bracket into a full bracket of size $n$ so that after the first round is played, the bracket becomes $B$. 
To achieve this, we assign seeds $1$ and $2$ to the two halves in the first step, and then seeds $2^{k-1} + 1, \dots, 2^{k}$ in each step $k \geq 2$. 
For each $2\le k\le \log n - 1$, we have the freedom of assigning the $2^{k - 1}$ seeds to any ``section'' of the bracket containing $2^{n-k}$ players with the property that the section has not been assigned any seeded player thus far (there are $2^k - 2^{k-1} = 2^{k - 1}$ such sections).
Since we want the bracket to become $B$ after the first round, we must also ensure that each seeded player loses in the first round. 
To this end, we create a bipartite graph between the seeds $2^{k-1} + 1, \dots, 2^{k}$ and the unassigned sections, where an edge is present if the seed loses against at least one unseeded player in that section. 
We then find a perfect matching in this graph and assign these seeds accordingly. 
The main technical aspect of the proof lies in showing that such a perfect matching is likely to exist. 
For this, we need to extend the classic result of~\cite{ErdosRe64} on the existence of perfect matchings in random graphs to a larger parameter regime.  

\subsection{Preliminaries}

\subsubsection{Existence of Perfect Matching in Random Graphs}

Recall that the \emph{Erd{\H{o}}s-R{\'{e}}nyi random bipartite graph distribution} $\cG(m, m, q)$ (where $m$ is a positive integer and $q$ a real number in $[0, 1]$) is the distribution of balanced bipartite graphs with $m$ vertices on each side such that for each pair of left and right vertices, there is an edge between them with probability~$q$ independently of other pairs. 

Our proof requires the high-probability existence of a perfect matching in $\cG(m, m, q)$, stated below. This statement is a slight generalization of the original result by~\citet{ErdosRe64}; our version provides a sharper bound in the regime where $q = 1 - \delta$ is close to $1$ compared to Erd\H{o}s and R\'{e}nyi's version. 
This sharpened bound will be needed for our proof to go through.

\begin{proposition}
\label{prop:er-matching}
Let $G$ be a bipartite graph sampled from the Erd{\H{o}}s-R{\'{e}}nyi random bipartite graph distribution $\mathcal{G}(m, m, 1 - \delta)$. If $\delta^{m/8} \leq 1/m$, then $G$ contains a perfect matching with probability at least $1 - \delta^{m/4}$.
\end{proposition}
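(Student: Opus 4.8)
The plan is to run the classical Erd\H{o}s--R\'enyi argument (Hall's theorem plus a union bound over Hall‑violating sets), but to track the constants carefully enough to land exactly at the tail $\delta^{m/4}$.

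First I would reduce to \emph{small} Hall witnesses. Since $G$ is balanced, it has no perfect matching iff some set $S$ on the left satisfies $|N_G(S)|<|S|$; taking such an $S$ of minimum cardinality forces $|N_G(S)|=|S|-1$. If in addition $|S|>\lceil m/2\rceil$, then $U:=R\setminus N_G(S)$ is a right‑side witness, because $1\le|U|=m-|S|+1\le\lceil m/2\rceil$ and every vertex of $U$ avoids $S$, so $N_G(U)\subseteq L\setminus S$ and $|N_G(U)|\le m-|S|=|U|-1$. Hence, if $G$ has no perfect matching, there is a set $W$ on the left or the right with $1\le|W|\le\lceil m/2\rceil$ and $|N_G(W)|\le|W|-1$.

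Next I would apply the union bound. If $|W|=i$ and $|N_G(W)|\le i-1$, then some set $U$ of size $m-i+1$ on the opposite side has no edge to $W$, an event of probability $\delta^{i(m-i+1)}$ by independence of the edges. Summing over the two sides, the $\binom{m}{i}$ choices of $W$, and the $\binom{m}{i-1}$ choices of $U$ gives
\[
\Pr[G\text{ has no perfect matching}]\ \le\ 2\sum_{i=1}^{\lceil m/2\rceil}\binom{m}{i}\binom{m}{i-1}\,\delta^{\,i(m-i+1)}.
\]
For $i\le\lceil m/2\rceil$ we have $m-i+1\ge m/2$, hence $i(m-i+1)\ge im/2=\tfrac{m}{4}+\tfrac{m(2i-1)}{4}$, so by the hypothesis $\delta^{m/8}\le 1/m$ (and $\delta\le1$),
\[
\delta^{\,i(m-i+1)}\ \le\ \delta^{m/4}\,\delta^{\,m(2i-1)/4}\ =\ \delta^{m/4}\,\bigl(\delta^{m/8}\bigr)^{2(2i-1)}\ \le\ \delta^{m/4}\,m^{-2(2i-1)}.
\]
Combining this with $\binom{m}{i}\binom{m}{i-1}\le m^{2i-1}$ shows that the $i$‑th summand is at most $\delta^{m/4}\,m^{-2i+1}$, and summing the resulting geometric series yields $\Pr[\text{no perfect matching}]\le\delta^{m/4}\cdot\frac{2m}{m^2-1}\le\delta^{m/4}$ for every $m\ge3$. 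The two leftover cases I would check by hand: for $m=1$ the probability is exactly $\delta\le\delta^{1/4}$, and for $m=2$ the hypothesis forces $\delta\le 1/16$, whence a direct count gives probability at most $4\delta^2\le\delta^{1/2}$.

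The only genuinely delicate point is the estimation in the previous paragraph: one must peel off precisely a $\delta^{m/4}$ factor and verify that what remains is a geometric series summing to at most $1$ using \emph{only} the hypothesis $\delta^{m/8}\le 1/m$. This is exactly what dictates the reduction to witness sets of size $\le\lceil m/2\rceil$ in the first step, since without that restriction the terms with $i$ close to $m$ would not be controlled by the hypothesis. Treating $m\in\{1,2\}$ separately is the minor price of insisting on the clean bound $\delta^{m/4}$ rather than merely $O(\delta^{m/4})$.
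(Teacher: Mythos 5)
Your proof is correct and follows essentially the same route as the paper's: Hall's marriage theorem plus a union bound over pairs of sets with no edges between them, followed by a careful estimation of $\sum_i \binom{m}{i}\binom{m}{i-1}\delta^{i(m+1-i)}$ using only the hypothesis $\delta^{m/8}\le 1/m$. The sole difference is bookkeeping: you tame the large-$i$ terms by first reducing to Hall witnesses of size at most $\lceil m/2\rceil$ (at the cost of a factor of $2$ and a separate check for $m\in\{1,2\}$), whereas the paper keeps all $i\in\{1,\dots,m\}$ and symmetrizes via $\min\{i,m+1-i\}$ in both the binomial coefficients and the exponent of $\delta$.
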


Following previous work (e.g.,~\citep{SudakovVu08,ManurangsiSu20}), we prove \Cref{prop:er-matching} by bounding the probability that the graph fails to satisfy the condition of Hall's Marriage Theorem, which we recall next. 
Denote by $N_G(S)$ the set of vertices adjacent to at least one vertex in $S$.

\begin{proposition}[Hall's Marriage Theorem] \label{prop:hall}
Let $G = (A, B, E)$ be any bipartite graph such that $|A| = |B|$. If $|N_G(S)| \geq |S|$ for every subset $S \subseteq A$, then $G$ has a perfect matching.
\end{proposition}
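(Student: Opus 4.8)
The plan is to prove Hall's Marriage Theorem by strong induction on $m := |A| = |B|$. The base case $m = 1$ is immediate: the single left vertex $a$ satisfies $|N_G(\{a\})| \geq 1$, so it has a neighbor to which it can be matched. For the inductive step I would assume the result for all bipartite graphs with fewer than $m$ left vertices and split into two cases according to whether the Hall condition ever holds with equality on a proper nonempty subset of $A$.

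In the \emph{surplus case}, where every nonempty proper subset $S \subsetneq A$ satisfies the strict inequality $|N_G(S)| \geq |S| + 1$, I would pick an arbitrary vertex $a \in A$ and match it to any neighbor $b$ (one exists, since $\{a\}$ is proper and nonempty for $m \geq 2$, forcing $|N_G(\{a\})| \geq 2$), then delete both $a$ and $b$. In the resulting balanced graph $G'$ on $(A \setminus \{a\}, B \setminus \{b\})$, deleting $b$ shrinks each neighborhood by at most one, so for every $S \subseteq A \setminus \{a\}$ we have $|N_{G'}(S)| \geq |N_G(S)| - 1 \geq |S|$. Thus the Hall condition survives, the inductive hypothesis yields a perfect matching of $G'$, and adjoining the edge $ab$ completes a perfect matching of $G$.

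In the \emph{tight case}, where some nonempty proper $A_0 \subsetneq A$ satisfies $|N_G(A_0)| = |A_0|$, I would set $B_0 := N_G(A_0)$ and split $G$ into the subgraph $G_1$ induced on $(A_0, B_0)$ and the subgraph $G_2$ induced on $(A \setminus A_0, B \setminus B_0)$; both are balanced, since $|A_0| = |B_0|$ and hence $|A \setminus A_0| = |B \setminus B_0|$. For $G_1$ the Hall condition is inherited directly: any $S \subseteq A_0$ has $N_G(S) \subseteq B_0$, so $N_{G_1}(S) = N_G(S)$ and $|N_{G_1}(S)| \geq |S|$. For $G_2$ I would verify the Hall condition by testing $S \subseteq A \setminus A_0$ together with $A_0$: since $N_G(A_0 \cup S) = B_0 \cup N_G(S)$ while $N_{G_2}(S) = N_G(S) \setminus B_0$ is disjoint from $B_0$, the Hall condition on $G$ gives $|A_0| + |N_{G_2}(S)| = |N_G(A_0 \cup S)| \geq |A_0| + |S|$, hence $|N_{G_2}(S)| \geq |S|$. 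Both $G_1$ and $G_2$ have strictly fewer left vertices than $G$ (using that $A_0$ is nonempty and proper), so the inductive hypothesis supplies perfect matchings $M_1$ and $M_2$, whose union is a perfect matching of $G$; this closes the induction.

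The main obstacle is the tight case, specifically the verification of the Hall condition for $G_2$: one must rule out a deficiency reappearing after the saturated block $(A_0, B_0)$ is removed, which hinges on the set identity $N_G(A_0 \cup S) = B_0 \cup N_G(S)$ and the disjointness of $B_0$ from $N_G(S) \setminus B_0$. An alternative route avoids the case split altogether: take a maximum matching $M$, suppose for contradiction it leaves some $a \in A$ unsaturated, let $S \subseteq A$ and $T \subseteq B$ be the vertices reachable from $a$ along $M$-alternating paths, observe that maximality forces $N_G(S) = T$ with every vertex of $T$ matched into $S$, so that $|N_G(S)| = |T| = |S| - 1 < |S|$, contradicting the hypothesis. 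Either argument suffices, and I would present the inductive one as the cleaner self-contained proof.
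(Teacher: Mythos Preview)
Your proof is correct and follows the classical strong-induction argument for Hall's Marriage Theorem, with the standard surplus/tight case split; the alternative alternating-path argument you sketch is also valid. However, the paper does not actually prove this proposition: it is stated as a well-known result and simply invoked in the proof of \Cref{prop:er-matching}. So there is no paper proof to compare against; your write-up supplies a standard proof where the paper gives none.
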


\begin{proof}[Proof of \Cref{prop:er-matching}]
Let $G = (A, B, E)$ be a graph drawn from $\cG(m, m, q)$. 
From \Cref{prop:hall}, the probability that it does \emph{not} contain a perfect matching can be written as
\begin{align*}
&\Pr[\exists S \subseteq A, |N_G(S)| < |S|] \\
&= \Pr[\exists S \subseteq A, T \subseteq B, |T| = |S| - 1, N_G(S) \subseteq T] \\
&\leq \sum_{S \subseteq A} \sum_{T \subseteq B \atop |T| = |S| - 1} \Pr[N_G(S) \subseteq T].
\end{align*}
where the inequality follows from the union bound.

Next, observe that $N_G(S) \subseteq T$ if and only if there is no edge from $S$ to any of the vertices in $B \setminus T$. 
The latter happens with probability $\delta^{|S| \cdot |B \setminus T|} = \delta^{|S|(m + 1 - |S|)}$. 
Plugging this back into the bound above, we have that the probability that a perfect matching does not exist is at most
\begin{align*}
&\sum_{S \subseteq A} \sum_{T \subseteq B \atop |T| = |S| - 1} \delta^{|S|(m + 1 - |S|)} \\
&= \sum_{i=1}^m \sum_{S \subseteq A \atop |S| = i} \sum_{T \subseteq B \atop |T| = i - 1} \delta^{i(m + 1 - i)} \\
&= \sum_{i=1}^m \binom{m}{i} \binom{m}{i - 1} \cdot \delta^{i(m + 1 - i)} \\
&\leq \sum_{i=1}^m m^{\min\{i, m - i\}} \cdot m^{\min\{i - 1, m + 1 - i\}} \cdot \delta^{i(m + 1 - i)} \\
&\leq \sum_{i=1}^m m^{2\cdot\min\{i, m + 1 - i\}} \cdot \delta^{i(m + 1 - i)} \\
&= \sum_{i=1}^m m^{2\cdot \min\{i, m + 1 - i\}}\cdot  \delta^{\min\{i, m + 1 - i\} \cdot \max\{i, m + 1 - i\}} \\
&= \sum_{i=1}^m \left(m^2 \cdot \delta^{\max\{i, m + 1 - i\}}\right)^{\min\{i, m + 1 - i\}}.
\end{align*}
Now, from our assumption on $\delta$, we have $m^2 \leq (1/\delta)^{m/4} \leq (1/\delta)^{\max\{i, m + 1 - i\}/2}$. Therefore, the above term is at most
\begin{align*}
\sum_{i=1}^m \delta^{\max\{i, m + 1 - i\} \cdot \min\{i, m + 1 - i\} / 2} 
&= \sum_{i=1}^m \delta^{i(m + 1 - i)/2} \\
&\leq m \cdot \delta^{m/2} \\
&\leq \delta^{m/4},
\end{align*}
where the last inequality once again follows from our assumption on $\delta$.
\end{proof}

In our proof of \Cref{thm:random-main}, we will need to apply \Cref{prop:er-matching} repeatedly to a specific setting of parameters. 
As such, it will be more convenient to state the following instantiation of the value of $\delta$ in \Cref{prop:er-matching}.

\begin{corollary}
\label{cor:er-matching-customized}
Let $G$ be a bipartite graph sampled from the Erd{\H{o}}s-R{\'{e}}nyi random bipartite graph distribution $\mathcal{G}(m, m, 1 - \delta)$. Let $n \geq m$ be a positive integer and let $p \in [0, 1]$ be such that $p \geq 128\ln n/n$. If $\delta \leq (1 - p)^{n/(16m)}$, then $G$ contains a perfect matching with probability at least $1 - 1/n^2$.
\end{corollary}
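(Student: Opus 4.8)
The plan is simply to invoke \Cref{prop:er-matching} with the given value of $\delta$; the work consists entirely of verifying its hypothesis and then translating its conclusion into the claimed probability bound. The two facts to extract from the assumptions $\delta \le (1-p)^{n/(16m)}$, $p \ge 128\ln n/n$, and $n \ge m$ are the bound needed for the hypothesis of \Cref{prop:er-matching} and the bound needed for its conclusion.

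First I would check the hypothesis $\delta^{m/8} \le 1/m$. Raising $\delta \le (1-p)^{n/(16m)}$ to the power $m/8$ and using $1 - p \le e^{-p}$ gives
\[
\delta^{m/8} \le (1-p)^{n/128} \le e^{-pn/128} \le e^{-\ln n} = \frac 1n \le \frac 1m,
\]
where the third inequality uses $p \ge 128\ln n/n$ and the last uses $n \ge m$. Hence \Cref{prop:er-matching} applies and yields that $G$ contains a perfect matching with probability at least $1 - \delta^{m/4}$.

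It then remains to bound $\delta^{m/4}$. Raising $\delta \le (1-p)^{n/(16m)}$ to the power $m/4$ and again using $1 - p \le e^{-p}$ together with $p \ge 128\ln n/n$ gives
\[
\delta^{m/4} \le (1-p)^{n/64} \le e^{-pn/64} \le e^{-2\ln n} = \frac{1}{n^2}.
\]
Combining this with the previous paragraph shows that $G$ has a perfect matching with probability at least $1 - 1/n^2$, which is exactly the claim.

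There is no substantial obstacle here: the statement is just a repackaging of \Cref{prop:er-matching} for the parameter regime used repeatedly in the proof of \Cref{thm:random-main}. The only points requiring care are keeping track of the two different exponents ($m/8$ for the hypothesis, $m/4$ for the success probability) and checking that the constant $128$ in the bound on $p$ is large enough to absorb both the $1/n \le 1/m$ slack in the hypothesis and the factor $2$ appearing in the exponent of $e^{-2\ln n}$ for the conclusion.
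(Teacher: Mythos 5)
Your proof is correct and follows exactly the same route as the paper's: verify the hypothesis $\delta^{m/8} \le 1/m$ via $(1-p)^{n/128} \le e^{-pn/128} \le 1/n \le 1/m$, then bound the failure probability $\delta^{m/4} \le (1-p)^{n/64} \le e^{-pn/64} \le 1/n^2$. Nothing to add.
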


\begin{proof}
First, note that
\begin{align*}
\delta^{m/8} \leq (1 - p)^{n/128} \leq e^{-pn/128} \leq 1/n \leq 1/m,
\end{align*}
where for the second inequality we use the well-known estimate $1+x\le e^x$, which holds for all real numbers $x$.
Therefore, we may apply \Cref{prop:er-matching}, which implies that $G$ contains a perfect matching with probability at least
\begin{align*}
1 - \delta^{m/4} \geq 1 - (1 - p)^{n/64} \geq 1 - e^{-pn/64} \geq 1 - 1/n^2,
\end{align*}
as desired.
\end{proof}

\subsubsection{Fixing Random Tournaments in the Non-Seeded Setting}

As mentioned in the proof sketch, we will also use our prior result from the non-seeded setting, which is stated below.

\begin{lemma}[\citep{ManurangsiSu21}] \label{thm:non-seed-ms}
In the non-seeded setting, if $p \geq 80\ln n / n$, then for each player, with probability $1 - o(1/n)$, the player is a knockout winner and a winning bracket for the player can be found in polynomial time.\footnote{The bound $1-o(1/n)$ was derived in the proof of this result; see Theorem~4.3 in the extended version of our previous paper \citep{ManurangsiSu21}.}
\end{lemma}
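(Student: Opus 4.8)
The plan is to fix an arbitrary target player $x$ and build a winning bracket for it explicitly, viewing a balanced bracket as a complete binary tree of depth $\log n$ with the players at the leaves. Write $N^+(x)$ for the set of players $x$ beats. The organizing principle is a \emph{doubling lemma}: if a set $R_1$ of size $2^j$ can be arranged so that $x$ wins it, and a disjoint set $R_2$ of size $2^j$ can be arranged so that some $d$ with $x\succ d$ wins it, then placing $R_1$ and $R_2$ as the two subtrees under a new root makes $x$ win $R_1\cup R_2$, since $x$ meets and beats $d$. Thus I would assemble a winning tree for $x$ level by level: starting from $\{x\}$, at each level I attach a fresh block of players forming a subtree whose designated winner is an out-neighbor of $x$. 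The sub-problem ``can player $d$ win a given block'' is of exactly the same type with $d$ in the role of $x$, so the whole construction is a recursion that bottoms out at singletons (a lone player trivially ``wins'' a size-$1$ block).

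The probabilistic engine has two parts. First, a Chernoff bound shows that with failure probability $o(1/n)$ the out-degree of $x$ is at least $pn/2 \ge 40\ln n$, so $x$ has comfortably more than the $\log n$ out-neighbors needed to serve as the roots of the $\log n$ blocks along its path; the same holds recursively for the internal roots. Second, and this is where I would lean on the matching machinery already developed, I would reduce the existence of a valid filling at each level to the existence of a perfect matching in a random bipartite graph whose edges encode ``beats'' relations. Because every such edge is present independently with probability at least $p$, the relevant graph is distributed as $\mathcal{G}(m,m,1-\delta)$ with $1-\delta \ge p$, so \Cref{cor:er-matching-customized} (via Hall's condition, \Cref{prop:hall}) produces the required matching with probability at least $1 - 1/n^2$. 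Crucially, it is the product-form exponent in the proof of \Cref{prop:er-matching}, rather than a naive per-vertex Chernoff-plus-union-bound estimate, that makes the failure probability genuinely small; the latter would be overwhelmed by the $2^n$ possible subsets and would not even give a bound of $o(1)$.

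With these tools in hand the assembly and bookkeeping are routine. There are $O(\log n)$ levels, each resolved by a single matching step, so a union bound over all the matchings keeps the total failure probability at $o(1/n)$, as required for the fixed player $x$; and since each level costs one bipartite-matching computation, the whole bracket is produced in polynomial time. The regime $p = \Omega(\log n/n)$ is exactly what the construction forces: $x$ must survive $\log n$ rounds, so it needs at least $\log n$ out-neighbors, and $p = 80\ln n/n$ is calibrated to guarantee this (and the analogous bound at every recursive level) through the Chernoff step.

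The main obstacle, and the reason the argument is delicate rather than immediate, is that a \emph{weak} target $x$ whose out-degree is only $\Theta(\log n)$ is typically not even a king: a short calculation shows that with high probability some players who beat $x$ are beaten by \emph{none} of $x$'s out-neighbors, so these ``stubborn'' opponents cannot be eliminated directly by an out-neighbor of $x$. Consequently the blocks attached along $x$'s path must absorb such players through multi-step elimination chains that dip into the abundant supply of remaining players (including other opponents of $x$), and the per-level matching must be set up so that a perfect matching certifies precisely the existence of such an absorbing arrangement, not merely a single round of domination. I expect the heart of the proof to lie in defining these auxiliary bipartite graphs so that a matching simultaneously keeps $x$ alive and maintains the recursive invariant, and in verifying that the adaptively chosen vertex sets at each level still fall within the parameter range where \Cref{cor:er-matching-customized} applies, all while driving the cumulative failure probability down to $o(1/n)$.
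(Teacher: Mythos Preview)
The paper does not prove this lemma; it is quoted from \citet{ManurangsiSu21}. The only window into the original argument is the proof sketch of the closely related \Cref{lem:non-seed-mod}, and that sketch reveals a key idea your proposal is missing.

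Your skeleton is right: one does build the bracket so that $x$ meets an out-neighbour $d_i$ in round $i$, with $d_i$ required to win a disjoint block of size $2^{i-1}$. But you propose to pick the $d_i$ as generic out-neighbours of $x$ and then recurse inside each block, hoping a perfect-matching step (via \Cref{cor:er-matching-customized}) will certify a valid filling at each level. You correctly identify the obstacle---when $x$ has outdegree only $\Theta(\log n)$, it is typically not a king, so some opponents cannot be eliminated in one hop---and you then defer its resolution to ``defining these auxiliary bipartite graphs'' and ``maintaining the recursive invariant.'' That is exactly the gap: as stated, the recursion has no reason to close, because a generic out-neighbour of a weak $x$ may itself be weak, and so on down the tree.

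The actual proof sidesteps this entirely. One first fixes the set $Z$ of players whose \emph{expected} outdegree (before the tournament is sampled) is at least roughly $0.45n$. A Chernoff bound shows that with probability $1-o(1/n)$, $x$ beats at least $\log n$ players $z_1,\dots,z_{\log n}$ in $Z$; these are the $d_i$. Because each $z_i$ is \emph{globally strong}---it beats close to half of every subset with high probability---making $z_i$ win an arbitrary block is easy and requires no delicate multi-step absorption. The single idea ``choose the round-opponents of $x$ from $Z$'' dissolves the obstacle you flagged; without it, your matching-based recursion remains a hope rather than an argument. (Also note that \Cref{cor:er-matching-customized} is machinery developed in \emph{this} paper for the seeded extension, built on top of the very lemma you are trying to prove; invoking it here is not circular in logic, but it does not by itself supply the missing structural idea.)
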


To prove \Cref{thm:random-main} when the desired winner is unseeded, it suffices to apply \Cref{thm:non-seed-ms} among the unseeded players as the first step. However, for the case where the desired winner is seeded, we need a slightly modified version of \Cref{thm:non-seed-ms} that considers a tournament with $n + 1$ players instead and picks out one player $y$ who loses to the desired winner $x$ (to be played with $x$ in the first round in the proof of \Cref{thm:random-main}). This modified lemma is stated below.
Its proof is nearly identical to that of~\Cref{thm:non-seed-ms}, so we only provide a proof sketch here.

\begin{lemma} \label{lem:non-seed-mod}
Suppose that we consider a tournament graph on $n + 1$ players generated under the generalized random model with $p \geq 80\ln n / n$. Let $x$ be any player. Then, with probability $1 - o(1/n)$, there exists a player $y$ with $x\succ y$ such that $x$ is a knockout winner in the $n$-player (non-seeded) tournament that results from removing $y$. Furthermore, $y$ and a winning bracket for $x$ can be found in polynomial time. 
\end{lemma}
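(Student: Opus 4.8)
The plan is to bootstrap \Cref{thm:non-seed-ms}: first pin down a candidate vertex $y$ inside $x$'s out-neighborhood, and then apply (the argument behind) \Cref{thm:non-seed-ms} to the $n$-player tournament obtained by deleting $y$. The only genuine subtlety is that $y$ must be chosen in a tournament-dependent way (so that $x \succ y$), and we must argue that this does not degrade the $1-o(1/n)$ guarantee of \Cref{thm:non-seed-ms} to the much weaker $1-o(1)$ that a crude union bound over all candidate $y$ would give.

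First I would reveal the outcomes of the $n$ matches incident to $x$ and let $A$ be the set of players that $x$ beats. Each such match is won by $x$ with probability at least $p \ge 80\ln n/n$, independently of the others, so $\mathbb{E}|A| \ge 80\ln n$, and a Chernoff bound gives $|A| \ge 40\ln n$ except with probability $o(1/n)$. On this event I fix $y$ to be an arbitrary element of $A$; since the tournament graph is given as input, this takes polynomial time. (Note that a mere guarantee $|A|\ge 1$ would not suffice: if $x$ beat only $y$, then $x$ has out-degree $0$ in $T-\{y\}$ and cannot win there, so we really do need $|A|=\Omega(\log n)$.)

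Now let $T' := T - \{y\}$, an $n$-player tournament. Conditioned on everything revealed so far, the match outcomes among $V\setminus\{x,y\}$ are still exactly those of a generalized random model on $n-1$ players — they are independent of $x$'s incident edges — while the out-neighborhood of $x$ in $T'$ is $A\setminus\{y\}$, of size at least $40\ln n - 1 = \Omega(\log n)$. The proof of \Cref{thm:non-seed-ms} shows that a fixed player is a knockout winner, with a winning bracket computable in polynomial time, except with probability $o(1/n)$, and it uses about the fixed player only a lower bound of the form $\Omega(\log n)$ on its out-degree — a bound that that proof itself establishes via precisely the Chernoff step we have now carried out separately. Hence the identical argument, run on $T'$ and conditioned on the information already revealed, shows that $x$ is a knockout winner in $T'$ (with a polynomial-time bracket) except with probability $o(1/n)$: the deletion of the single vertex $y$ from $x$'s out-neighborhood is absorbed into the slack of the $\Omega(\log n)$ bound, and conditioning on $x$'s incident edges is harmless because what that argument actually exploits is the generalized-random structure on the untouched part $V\setminus\{x,y\}$. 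A union bound over the two failure events — $|A| < 40\ln n$, and $x$ failing to be a knockout winner in $T'$ — then gives the claimed probability $1 - o(1/n)$, and both $y$ and the bracket are produced in polynomial time. I expect the lone delicate point to be exactly the one just addressed: handling $x$'s out-neighborhood "by hand" first, and invoking \Cref{thm:non-seed-ms} only on the portion of the tournament that remains fresh, is what keeps the tournament-dependent choice of $y$ innocuous.
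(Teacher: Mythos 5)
Your overall strategy---reveal the edges incident to $x$ first, pick $y$ from $x$'s out-neighborhood based only on that information, and then run the non-seeded argument on the still-fresh remainder---is exactly the paper's strategy, and your observation that this ordering is what makes the tournament-dependent choice of $y$ innocuous is the right one. However, there is a genuine gap in the step where you justify re-invoking the proof of \Cref{thm:non-seed-ms}: you assert that that proof ``uses about the fixed player only a lower bound of the form $\Omega(\log n)$ on its out-degree,'' and accordingly you apply Chernoff to $|A|$ and pick $y\in A$ arbitrarily. That is not what the underlying proof needs. It needs $x$ to beat at least $\log n$ players from the specific set $Z$ of players whose \emph{expected} outdegree (determined by the $p_{ij}$'s before the tournament is drawn) is at least roughly $0.45n$; these are the players $x$ is scheduled to meet in rounds $1,\dots,\log n$, and they must be strong enough to win their own sub-brackets. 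In the generalized random model the probabilities $p_{xj}$ may be as small as $p$ exactly on $Z$ and as large as $1/2$ off $Z$, so $|A|\ge 40\ln n$ gives no control over $|A\cap Z|$, and an arbitrary $y\in A$ gives no guarantee that $\log n$ usable opponents survive its deletion.

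The repair is precisely what the paper does: redefine $Z$ with the expected-outdegree threshold bumped from $0.45n+1$ to $0.45n+2$ (to absorb the loss of one opponent when $y$ is deleted from the $(n+1)$-player tournament), show via the same Chernoff/averaging argument as in the prior work that $x$ beats at least $\log n+1$ players $z_0,\dots,z_{\log n}$ of $Z$ with probability $1-o(1/n)$, set $y=z_0$, and then run the old bracket construction with $x$ meeting $z_i$ in round $i$. So your high-level plan is sound and matches the paper, but as written the Chernoff step bounds the wrong quantity, and the claim about what \Cref{thm:non-seed-ms}'s proof requires of $x$ is false in the generalized random model.
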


\begin{proof}[Proof Sketch]
The proof follows the same outline as our previous one \citep{ManurangsiSu21}. Specifically, let $Z$ denote the set of players with expected outdegree at least $0.45n + 2$ before the tournament graph is generated (as opposed to $0.45n + 1$ in our previous proof). Our previous argument implies that with probability $1 - o(1/n)$, $x$ beats at least $\log n + 1$ players in $Z$. Let $z_0, \dots, z_{\log n}$ be such players. Set $y = z_0$ and consider the rest of the tournament. We may then follow our previous proof, which shows that with probability $1 - o(1/n)$, there exists a bracket such that $x$ gets to play $z_i$ in round $i$, therefore ensuring that $x$ wins the knockout tournament.
\end{proof}

\subsection{Main Algorithm: Proof of \Cref{thm:random-main}}

We now have all the ingredients to prove \Cref{thm:random-main}. 
For $0\le k\le \log n$, we define a \emph{level-$k$ section} of a bracket of size~$n$ to consist of the positions $(w - 1) \cdot 2^{\log n-k} + 1, \dots, w \cdot 2^{\log n - k}$ in the bracket for some $w \in \{1, \dots, 2^k\}$. For example, a level-$0$ section refers to the entire bracket, a level-$1$ section refers to half of the bracket, and so on.  
Recall that there are $n/2$ seeds.

\begin{proof}[Proof of \Cref{thm:random-main}]
Consider any player $x$. We will show that with probability $1 - o(1/n)$, we can efficiently find a winning bracket for $x$. Taking a union bound over all players~$x$ establishes the theorem.

We consider two cases based on whether $x$ is seeded.

\paragraph{Case I: $x$ is unseeded.}
Our algorithm works as follows:
\begin{itemize}
\item First, use the algorithm from \Cref{thm:non-seed-ms} to find a winning bracket $B$ for $x$ in the tournament of size $n/2$ consisting of all unseeded players. If such a bracket cannot be found, fail.
\item We will extend the bracket $B$ to the entire tournament of size $n$ as follows. Start with a bracket $B'$ of size $n$ where position $i$ is empty if $i$ is odd and is assigned the player in position $i/2$ of $B$ if $i$ is even.
\item For $k = 1, \dots, \log n - 1$:
\begin{itemize}
\item If $k = 1$, let $T_k$ consist of seeds $1$ and $2$.  Otherwise, let $T_k$ consist of seeds $2^{k - 1} + 1, \dots, 2^k$.
\item Let $B^k_1, \dots, B^k_{|T_k|}$ denote the level-$k$ sections of $B'$ such that no player with seed number smaller than $2^k$ has been assigned. 
\item Let $G_k = (T_k, \{B^k_1, \dots, B^k_{|T_k|}\}, E)$ be the bipartite graph such that there is an edge between $u \in T_k$ and $B^k_i$ if and only if $u$ loses to an unseeded player in $B^k_i$.
\item Find a perfect matching in $G_k$. If no perfect matching exists, fail. Otherwise, if $u \in T_k$ is matched to $B^k_i$ in the perfect matching, then let $u$ play an unseeded player it loses to in $B^k_i$ in the first round.
\end{itemize}
\end{itemize}
Clearly, the algorithm runs in polynomial time and, if it succeeds, $x$ is the winner. Therefore, it suffices to show that the algorithm fails with probability $o(1/n)$. 

Consider $G_k$. 
Notice that there are $n/2^{k + 1}$ unseeded players in each $B^k_i$. 
Thus, for any $u \in T_k$, the probability that it loses to at least one player in $B^k_i$ is at least $1 - (1 - p)^{n/2^{k + 1}}$. 
Therefore, we may apply\footnote{Note that while the probability that each edge in $G_k$ exists is at least (and may not be exactly) $1 - \delta$, we can still apply \Cref{cor:er-matching-customized} because increasing the probability of an edge only increases the probability that a perfect matching exists.} \Cref{cor:er-matching-customized} with $\delta = (1-p)^{n/2^{k+1}}$ and $m = |T_k| \geq 2^{k-1}$, which ensures that a matching exists with probability at least $1 - 1/n^2$.

Taking a union bound over all $k$, the probability that all of $G_1, \dots, G_{\log n-1}$ admit a perfect matching is at least $1 - \log n / n^2 = 1 - o(1/n)$. 
Moreover, since $p \ge 160\ln n/n \ge 80\ln (n/2)/(n/2)$, \Cref{thm:non-seed-ms} ensures that the first step of the algorithm fails with probability at most $o(1/n)$. Therefore, the entire algorithm succeeds with probability at least $1 - o(1/n)$, as desired.

\paragraph{Case II: $x$ is seeded.} This case is very similar to Case~I except that we need to use \Cref{lem:non-seed-mod} in order to ensure that we can pair $x$ with $y$ in the first round, and we need to be slightly more careful in the subsequent steps as $x$ and $y$ are already matched. More formally, our algorithm in this case works as follows:
\begin{itemize}
\item First, use the algorithm from \Cref{lem:non-seed-mod} to find an unseeded player~$y$ with $x\succ y$ and a winning bracket $B$ for $x$ in the tournament of size $n/2 + 1$ consisting of all unseeded players and $x$. 
If such a bracket cannot be found, fail.
\item We will extend the bracket $B$ to the entire tournament of size $n$ as follows. 
Start with a bracket $B'$ of size $n$ where position $i$ is empty if $i$ is odd and is assigned the player in position $i/2$ of $B$ if $i$ is even. 
Furthermore, let $x$ play $y$ in the first round of $B'$.
\item For $k = 1, \dots, \log n - 1$:
\begin{itemize}
\item If $k = 1$, let $T_k$ consist of seeds $1$ and $2$. Otherwise, let $T_k$ consist of seeds $2^{k-1} + 1, \dots, 2^k$. Then, let $T'_k = T_k \setminus \{x\}$.
\item Let $B^k_1, \dots, B^k_{|T'_k|}$ denote the level-$k$ sections of $B'$ such that no player with seed number smaller than $2^k$ has been assigned. 
For each $i = 1,\dots, |T'_k|$, if $x$ is not in $B^k_i$, let $U^k_i$ denote the set of unseeded players in bracket $B^k_i$. 
Otherwise, if $x$ is in $B^k_i$, let $U^k_i$ denote the set of unseeded players in bracket $B^k_i$ that are \emph{not} in the same level-$(k + 1)$ section as~$x$.
\item Let $G_k = (T_k', \{B^k_1, \dots, B^k_{|T_k'|}\}, E)$ be the bipartite graph such that there is an edge between $u \in T'_k$ and $B^k_i$ if and only if $u$ loses to a player in $U^k_i$.
\item Find a perfect matching in $G_k$. If no perfect matching exists, fail. Otherwise, if $u \in T_k'$ is matched to $B^k_i$ in the perfect matching, then let $u$ play a player it loses to in $U^k_i$ in the first round.
\end{itemize}
\end{itemize}
Similarly to Case~I, the algorithm runs in polynomial time and, if it succeeds, $x$ is the winner.

To show that the algorithm fails with probability $o(1/n)$, consider $G_k$. Notice that $|U^k_i| \geq n/2^{k + 2}$. Thus, for any $u \in T'_k$, the probability that it loses to at least one player in $U^k_i$ is at least $1 - (1 - p)^{n/2^{k + 2}}$. Applying \Cref{cor:er-matching-customized} with $\delta = (1-p)^{n/2^{k+2}}$ and $m = |T'_k| \geq 2^{k-2}$  ensures that the matching exists with probability at least $1 - 1/n^2$.

Taking a union bound over all $k$, the probability that all of $G_1, \dots, G_{\log n-1}$ admit a perfect matching is at least $1 - \log n / n^2 = 1 - o(1/n)$. 
Moreover, since $p \ge 160\ln n/n \ge 80\ln (n/2)/(n/2)$, \Cref{lem:non-seed-mod} ensures that the first step of the algorithm fails with probability at most $o(1/n)$. 
Hence, the entire algorithm succeeds with probability at least $1 - o(1/n)$, as desired.
\end{proof}

\section{Computational Complexity}
\label{sec:complexity}

In this section, we turn our attention to the complexity of computing a valid winning bracket for our desired winner.
Recall that this problem is NP-complete in the non-seeded setting \citep{AzizGaMa18}.
We show that the intractability continues to hold in the seeded setting.

\begin{theorem}
\label{thm:hardness-constant}
For any constant number of seeds $s$, TFP is NP-complete.
\end{theorem}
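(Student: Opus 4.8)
The plan is to prove membership in NP and then give a polynomial-time many-one reduction from TFP in the non-seeded setting, which is NP-complete by \citet{AzizGaMa18}. Membership is routine: a valid winning bracket is a certificate of size $O(n\log n)$, and checking that it respects the seed constraints (seeds $1,\dots,2^j$ lie in distinct level-$j$ sections for each $j$) and that $x$ emerges as the winner are both straightforward simulations running in polynomial time.

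For the reduction, fix the (constant, power-of-two) number of seeds $s$ and write $t=\log s$. Given a non-seeded instance $(T,x)$ with $T$ on a player set $V$, $|V|=n=2^r$, and $x\in V$, I would build a seeded instance on $N=sn=2^{r+t}$ players: the $n$ players of $V$, plus $s-1$ ``boss'' players $d_2,\dots,d_s$, plus $(s-1)(n-1)$ ``fodder'' players. The seeds are $x$ (seed $1$) and $d_2,\dots,d_s$ (seeds $2,\dots,s$). The match outcomes are: inside $V$, exactly as in $T$; every player of $V\setminus\{x\}$ loses to every boss and every fodder player; $x$ loses to every fodder player but beats every boss; every $d_j$ beats every fodder player; and all remaining matches (within the set of bosses, within the set of fodder) are oriented arbitrarily. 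Since $s$ is a fixed constant, $N=O(n)$ and the tournament graph is computable in polynomial time.

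The easy direction: if $x$ wins $T$ under some bracket $\beta$, place $V$ arranged by $\beta$ in one level-$t$ section, and in each of the other $s-1$ level-$t$ sections place one boss $d_j$ with $n-1$ fodder players, letting $d_j$ win every match there. Assigning these $s$ sections to the slots of the standard seeded top-bracket on $s=2^t$ players (seed $1$ meets seed $2$ only in the final, and so on) produces a valid bracket, and since $x$ beats every $d_j$, the section winners $x,d_2,\dots,d_s$ play out with $x$ as overall winner.

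The substantive direction, which I expect to be the main obstacle, is showing that if $x$ wins the constructed tournament under some valid bracket, then $x$ wins $T$; the crux is that $x$ cannot ``cheat'' by routing through bosses or fodder. Since $x$ and each $d_j$ are both among seeds $1,\dots,j$, the seed constraints forbid them from meeting before round $R-\lceil\log j\rceil+1\ge R-t+1=\log n+1$, where $R=\log N$; and $x$ beats no fodder player. Hence in rounds $1,\dots,\log n$ every opponent $x$ defeats lies in $V\setminus\{x\}$, and $x$'s level-$t$ section contains no boss (two seeds cannot share a level-$t$ section). I would then prove, by induction on size, that whenever a player $v\in V\setminus\{x\}$ wins a sub-bracket $S$ with $x\notin S$, one has $S\subseteq V$: the final opponent of $v$ in $S$ must be a player $v$ beats, hence in $V$ and distinct from $x$, so induction applies to both halves of $S$. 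The sibling sub-brackets feeding $x$'s first $\log n$ matches do not contain $x$ and are won by players of $V\setminus\{x\}$, so by the claim they lie entirely in $V$; together with $x$ itself they constitute a set of $1+1+2+\dots+2^{\log n-1}=n$ players inside $V$, which therefore is exactly $V$, and the bracket it carries is a winning bracket for $x$ in $T$. As the construction uses $s$ only as a fixed parameter, this yields NP-hardness for every constant number of seeds, completing the proof.
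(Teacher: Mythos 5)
Your proof is correct, and it follows the same overall strategy as the paper's: reduce from non-seeded TFP by padding the instance with players who dominate $V\setminus\{x\}$, assign the seeds so that the seed constraints prevent $x$ from meeting the padding's designated ``top'' players before round $\log n + 1$, and conclude that the level-$t$ section containing $x$ must be exactly $V$. The gadget differs in its details. The paper uses $t$ nested sets $V_1,\dots,V_t$ of sizes $n,2n,\dots,2^{t-1}n$, each headed by a player $x_i$ whom $x$ beats while losing to the rest of $V_i$, with the seeds spread across these sets (including on otherwise unremarkable players simply to fill out the seed count); in the intended bracket $x$ meets $x_1,\dots,x_t$ in the last $t$ rounds in that fixed order. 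You instead use a flat design in which all $s$ seeds sit on $x$ and the $s-1$ bosses, each boss is shepherded through its own size-$n$ section by fodder players, and it is immaterial which boss $x$ meets in which of the last $t$ rounds. Your backward direction is routed through an inductive claim (any sub-bracket not containing $x$ whose winner lies in $V\setminus\{x\}$ is contained in $V$), which is valid but slightly more roundabout than the paper's direct observation that any gadget player other than a seed placed in $x$'s section would win that section outright --- an observation that would also apply verbatim to your fodder players, since they beat all of $V$ including $x$. Both arguments are sound; yours has the minor stylistic advantage of handling every constant $s$ uniformly rather than presenting $s=4$ first and then generalizing.
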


\begin{proof}
The problem belongs to NP since a valid winning bracket for our desired winner $x$ can be verified in polynomial time, so we focus on the hardness.

We reduce from TFP in the non-seeded setting.
For ease of presentation, we will first show the reduction for $s = 4$ and later explain how to extend it to any constant value of $s$.
Let $I$ be an instance of non-seeded TFP with a set $V$ of $n$ players, one of whom is our desired winner $x$.
We create sets $V_1$ and $V_2$ of players with the following properties:
\begin{itemize}
\item $|V_1| = n$, and $x_1\in V_1$ beats all other players in $V_1$.
\item $|V_2| = 2n$, and $x_2\in V_2$ beats all other players in $V_2$.
\item All players in $V_1$ beat all players in $V$, with the exception that $x$ beats $x_1$.
\item All players in $V_2$ beat all players in $V$, with the exception that $x$ beats $x_2$.
\item All remaining outcomes can be chosen arbitrarily.
\end{itemize}
The new instance $I'$ consists of $n + n + 2n = 4n$ players.
The top two seeds are $x$ and $x_2$, and the other two seeds are $x_1$ and an arbitrary player $y\in V_2$.
This completes the description of our reduction; an illustration is shown in \Cref{fig:reduction-constant}.
Note that the reduction takes polynomial time.

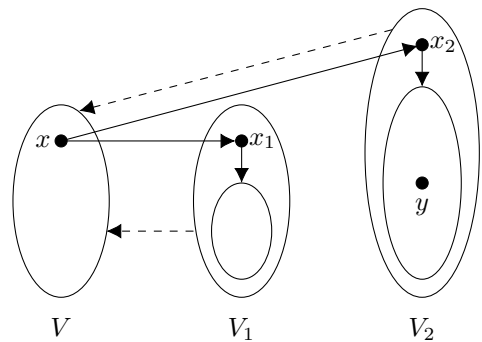
\begin{figure}[!ht]
\centering
\begin{tikzpicture}[scale=0.8]
\draw (3,5) ellipse (0.8cm and 1.6cm);
\draw[fill=black] (3,6) circle  [radius=0.1];
\node at (2.7,6) {$x$};
\node at (3,2.9) {$V$};
\draw (6,5) ellipse (0.8cm and 1.6cm);
\draw[fill=black] (6,6) circle  [radius=0.1];
\node at (6.35,6) {$x_1$};
\draw (6,4.5) ellipse (0.5cm and 0.8cm);
\draw[->] (6,6) to (6,5.3);
\node at (6,2.87) {$V_1$};
\draw (9,5.8) ellipse (0.95cm and 2.4cm);
\draw[fill=black] (9,7.6) circle  [radius=0.1];
\node at (9.35,7.6) {$x_2$};
\draw (9,5.3) ellipse (0.65cm and 1.6cm);
\draw[->] (9,7.6) to (9,6.9);
\draw[fill=black] (9,5.3) circle  [radius=0.1];
\node at (9,4.9) {$y$};
\node at (9,2.87) {$V_2$};
\draw[->] (3,6) to (5.9,6);
\draw[->] (3,6) to (8.93,7.58);
\draw[->,dashed] (5.2,4.5) to (3.75,4.5);
\draw[->,dashed] (8.5,7.85) to (3.3,6.5);
\end{tikzpicture}
\caption{Illustration of the reduction for \Cref{thm:hardness-constant} when $s = 4$}
\label{fig:reduction-constant}
\end{figure}

We claim that $x$ can win in the original instance $I$ if and only if it can win in the new instance $I'$.

($\Rightarrow$) Suppose that $x$ can win in $I$.
To construct a winning bracket for $x$ in $I'$, we use the winning bracket for $x$ in $I$ as one quarter, put the players from $V_1$ in the other quarter in the same half as $x$, and put the players from $V_2$ in the opposite half to $x$ in such a way that $x_2$ and $y$ are in different quarters.
Notice that the constructed bracket satisfies the seed constraints.
In the resulting tournament, $x$ progresses to the semifinals by winning its bracket for $I$, $x_1$ progresses to the semifinals by winning its quarter, and $x_2$ progresses to the final by winning its half.
Hence, $x$ beats $x_1$ in the semifinals and beats $x_2$ in the final, and therefore wins the tournament.

($\Leftarrow$) Suppose that $x$ can win in $I'$, and consider its winning bracket.
We claim that $x$'s quarter must contain exactly the players from $V$.
Indeed, it cannot contain $x_1$ or $x_2$ due to seed constraints, and if it contains players from $(V_1\cup V_2)\setminus\{x_1,x_2\}$, then since all of these players beat all players in $V$, the winner of $x$'s quarter will be from $(V_1\cup V_2)\setminus\{x_1,x_2\}$, and in particular not $x$.
Hence, the bracket of $x$'s quarter is a winning bracket for $x$ in $I$.

This completes the NP-hardness proof for $s = 4$.
To extend it to any constant number of seeds $s=2^t$ (including $s = 2$), we use a similar construction.
Starting with a non-seeded TFP instance $I$ of $n$ players, we create sets $V_1,V_2,\dots,V_t$ of size $n,2n,\dots,2^{t-1}n$.
For each $1\le i\le t$, $V_i$ contains a player $x_i$ who beats all other players in $V_i$, and all players in $V_i$ beat all players in $V$ except that $x$ beats $x_i$.
The new instance $I'$ consists of $2^tn$ players.
The seeds are assigned as follows:
\begin{itemize}
\item The top two seeds are $x$ and $x_t$.
\item The next two seeds are $x_{t-1}$ and a player in $V_t$.
\item The next four seeds are $x_{t-2}$, a player in $V_{t-1}$, and two players in $V_t$.
\item $\dots$
\item The last $2^{t-1}$ seeds are $x_1$, a player in $V_2$, two players in $V_3$, four players in $V_4$, $\dots$, and $2^{t-2}$ players in $V_t$.
\end{itemize}
If $x$ can win in $I$, then we can construct a bracket in $I'$ so that after winning its subtournament from $I$, $x$ beats $x_1,x_2,\dots,x_t$ in the last $t$ rounds to win the tournament for~$I'$.
Conversely, if $x$ can win in $I'$, a similar argument as in the case $s=4$ shows that the section of the bracket containing $x$'s first $\log n$ rounds must contain exactly the players from $V$; this yields a winning bracket for $x$ in $I$.
Finally, since $s$ is constant, the reduction takes polynomial time.
\end{proof}

Next, we prove that the problem remains intractable when the number of seeds is the maximum possible.

\begin{theorem}
\label{thm:hardness-n-2}
For $s = n/2$, TFP is NP-complete.
\end{theorem}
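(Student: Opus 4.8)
The plan is to reduce from TFP in the non-seeded setting. Given a non-seeded instance $I$ consisting of a set $V$ of $n_0 = 2^{r_0}$ players and a desired winner $x$, I would build a seeded instance $I'$ on $N = 2n_0$ players by adjoining a set $D$ of $n_0$ ``dummy'' players, declaring the $n_0$ players of $D$ to be the seeds (so that $s = n_0 = N/2$) and keeping every player of $V$ unseeded, keeping the outcomes within $V$ exactly as in $I$, letting $V \succ D$, and orienting the edges inside $D$ arbitrarily. The reduction is clearly polynomial-time, and TFP is in NP since a valid winning bracket for $x$ can be verified in polynomial time, so it suffices to show that $x$ is a knockout winner of $I$ if and only if $x$ is a knockout winner of $I'$.

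The heart of the argument is a structural observation about $I'$. Because $s = N/2$, the first round of every valid bracket of $I'$ is forced to pair each of the $N/2$ seeds with one of the $N/2$ non-seeds, i.e., each first-round match is between a dummy and a player of $V$; since $V \succ D$, every dummy is then eliminated in the first round and all of $V$ advances, each surviving player inheriting the seed of the dummy it beat. Moreover, the first-round matching can be taken to be any bijection between $D$ and $V$, so after the first round we may realize an arbitrary bijective seeding $\ell\colon V \to \{1,\dots,n_0\}$ on the survivors. Since all $n_0$ seeds are present on these $n_0$ survivors, the tournament played in rounds $2,\dots,r$ is a fully-seeded tournament on $V$, whose seed constraints, by the equivalence $s = n \equiv s = n/2$ noted in \Cref{sec:prelim}, are exactly those of the $s = n_0/2$ setting on $V$. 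Hence $x$ is a knockout winner of $I'$ if and only if there is a seeding $\ell$ of $V$ together with an $\ell$-valid bracket on $V$ under which $x$ wins.

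It then remains to check the two directions of the equivalence. If $x$ wins $I$ via some bracket $\beta$ on $V$, I would use the standard fact that any bracket of a balanced knockout tournament is consistent with some seeding (concretely, the freedom in the first-round matching lets us route seed $1$ to any player in one half of $\beta$, seed $2$ to any player in the other half, and recursively route each block of seeds $2^{k-1}+1,\dots,2^k$ to players spread across the level-$k$ sections of $\beta$ not yet holding a stronger seed): pick such a seeding $\ell$, realize it via the forced first round of $I'$, and play $\beta$ afterwards, yielding a valid winning bracket for $x$ in $I'$. Conversely, a valid winning bracket for $x$ in $I'$ eliminates all dummies in round~$1$ as above and then restricts to a bracket on $V$ in which $x$ wins, so $x$ is a knockout winner of $I$. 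I expect the one part needing genuine care to be the bookkeeping in the middle paragraph---verifying that the forced first round leaves exactly the players of $V$ carrying a freely chosen full seeding, so that the residual constraints of $I'$ coincide precisely with the $s = n_0/2$ constraints on $V$ and the problem truly collapses to the non-seeded one.
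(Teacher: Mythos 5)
Your proposal is correct and takes essentially the same approach as the paper: the paper uses exactly this reduction (adjoin $n$ dummy players who all lose to $V$, designate them as the $n$ seeds), argues that the level-$(\log n)$ seed constraint forces every first-round match to pair a dummy with a player of $V$, and handles the two directions just as you do. The only difference is presentational---the paper bypasses your middle paragraph's bookkeeping about inherited seeds by observing directly that the seed constraints restrict only the dummies' positions and all dummies are eliminated in round one, so the residual bracket on $V$ is unconstrained.
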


\begin{proof}
As with \Cref{thm:hardness-constant}, we only need to show the NP-hardness.
We again reduce from TFP in the non-seeded setting.
Let $I$ be an instance of non-seeded TFP with a set $V$ of $n$ players, one of whom is our desired winner $x$.
We create a set $V'$ of $n$ additional players who all lose to the $n$ original players.
The outcomes between players in $V'$ can be chosen arbitrarily.
The new instance $I'$ consists of $2n$ players, and the $n$ new players are the $n$ seeds.
It suffices to show that $x$ can win in $I$ if and only if it can win in $I'$.

($\Rightarrow$) Suppose that $x$ can win in $I$.
In the first round for $I'$, we will pair each player from $V$ with a player from $V'$; this ensures that only players from $V$ remain from the second round onward.
We position the players from $V$ so that the bracket from the second round onward is a winning bracket for $x$ in~$I$.
Moreover, we position the players from $V'$ so that the seed constraints are satisfied.
Hence, $x$ wins the resulting tournament for $I'$.

($\Leftarrow$) Suppose that $x$ can win in $I'$, and consider its winning bracket.
By the seed constraints, every first-round match must be between a player from $V$ and one from $V'$; hence, only players from $V$ progress to the second round.
It follows that the bracket from the second round onward is a winning bracket for $x$ in~$I$.
\end{proof}

In the non-seeded setting, the fastest known algorithm for TFP is due to~\citet{KimVa15} and runs in $2^n \cdot n^{O(1)}$ time. 
Not only can the algorithm solve TFP, but it can also count the number of brackets that result in each player becoming the tournament winner. Here we provide an algorithm with similar guarantees in the seeded setting.

\begin{theorem} \label{thm:alg}
For any $n$ and $s$, there exists a $2^n \cdot n^{O(1)}$-time algorithm that outputs, for each player, the number of valid brackets such that the player wins the tournament.
\end{theorem}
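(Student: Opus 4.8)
The plan is to adapt the subset-based dynamic program of \citet{KimVa15} so that, on top of the usual ``set of surviving players and its winner'' information, it keeps track of just enough about the seeds to enforce validity. The starting point is a structural fact about valid brackets: the players occupying a level-$\ell$ section always intersect the top $2^j$ seeds in a \emph{forced} number of players, for every $j$. Concretely, for $\ell \ge \log s$ the section contains at most one seed in total (exactly one when $\ell = \log s$), while for $\ell < \log s$ it contains exactly one seed in $\{1,\dots,2^\ell\}$, exactly one more in $\{2^\ell+1,\dots,2^{\ell+1}\}$, two in the next block, and so on. Call a set $S$ of size $n/2^\ell$ a \emph{valid level-$\ell$ set} if it meets this cardinality pattern. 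For such $S$ and $x\in S$, let $g_\ell(S,x)$ denote the number of valid sub-brackets that place exactly the players of $S$ into a level-$\ell$ section and are won by $x$, and set $g_\ell(S,x)=0$ when $S$ is not a valid level-$\ell$ set; the answer for player $x$ is then $g_0(V,x)$, and the base case is $g_{\log n}(\{v\},v)=1$. The first thing I would check is that, for valid $S$,
\[
g_\ell(S,x) \;=\; c\cdot \sum_{\substack{S_1\sqcup S_2 = S,\ |S_1|=|S_2|\\ x\in S_1}} g_{\ell+1}(S_1,x)\cdot\Bigl(\sum_{\substack{y\in S_2\\ x\succ y}} g_{\ell+1}(S_2,y)\Bigr),
\]
where $c$ is a fixed factor reflecting the convention for counting brackets, and that this together with the zeroing rule computes $g_\ell$ correctly: in any valid bracket every section is a valid set of its level, so a valid bracket on $S$ splits into valid sub-brackets on valid level-$(\ell+1)$ sets $S_1,S_2$, and conversely any such pair can be recombined.

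The step I expect to be the crux is showing that this plain recurrence enforces the \emph{nested} seed constraints without hurting the running time. The naive fix --- restricting the outer sum to ``seed-legal'' splits --- is self-defeating, because that restriction can only be folded into a fast subset-convolution evaluation by refining the convolution's rank with the intersection size against \emph{every} seed block, which multiplies the cost by about $2^{\Theta(\log^2 n)}=n^{\Theta(\log n)}$. The observation that removes the difficulty is that no such restriction is needed: \emph{if $S_1,S_2$ are valid level-$(\ell+1)$ sets and $S_1\sqcup S_2$ is a valid level-$\ell$ set, then the split $(S_1,S_2)$ is automatically seed-legal, and every seed-legal split has this form.} This follows from the cardinality characterization --- the unique seed in $\{1,\dots,2^\ell\}$ and the unique seed in $\{2^\ell+1,\dots,2^{\ell+1}\}$ are forced to lie in the two different halves, and the higher blocks, being of forced even size, distribute freely. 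So one runs the \emph{unrestricted} recurrence (terms with an invalid $S_1$ or $S_2$ vanish because $g_{\ell+1}$ was already zeroed there), and afterwards zeroes out $g_\ell(S,\cdot)$ for every $S$ failing the level-$\ell$ cardinality test, which stops any ``garbage'' from an invalid intermediate section from ever being used one level up.

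For the running time, fix a level $\ell$ and a candidate winner $x$. The inner sum $\sum_{y\in S_2,\ x\succ y} g_{\ell+1}(S_2,y)$ is a function of $S_2$ tabulable over all subsets in $2^n\cdot\mathrm{poly}(n)$ time, and the outer sum is then exactly the subset convolution of two functions supported on the sets of size $n/2^{\ell+1}$; this is computable in $2^n\cdot\mathrm{poly}(n)$ time by the fast subset-convolution method of Bj\"orklund, Husfeldt, Kaski, and Koivisto (the same tool that yields the non-seeded bound of \citet{KimVa15}), and the subsequent zeroing pass costs another $O(2^n\cdot\mathrm{poly}(n))$. All intermediate counts are at most $n!=2^{O(n\log n)}$, so each arithmetic operation takes $\mathrm{poly}(n)$ time. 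Summing over the $\log n$ levels and the $n$ choices of $x$ yields the claimed $2^n\cdot n^{O(1)}$ bound, uniformly in $s$, and reading off $g_0(V,x)$ for every $x$ gives the desired output. I would finish by checking the extreme regimes $s=2$ (where the cardinality test just says each half has exactly one of the two seeds) and $s=n/2$ (where every level-$(\log n-1)$ set must contain exactly one seed) to make sure the characterization of valid level-$\ell$ sets and the factor $c$ are pinned down correctly.
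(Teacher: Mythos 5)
Your proposal is correct and takes essentially the same approach as the paper's proof: the paper runs the same subset dynamic program indexed by (section size, designated winner), enforces the seed constraints by zeroing out any set whose intersection with each seed prefix $[\ell]$ has the wrong cardinality, and evaluates each level via the fast subset convolution lemma of Kim and Vassilevska Williams (building on Bj\"orklund et al.). The only cosmetic difference is your unspecified normalization constant $c$, which equals $1$ under the paper's convention of summing only over the half containing the designated winner.
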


To prove this theorem, we need the following lemma of~\citet{KimVa15}, whose proof relies on techniques due to~\citet{BjorklundHuKa07}.
For any positive integer $t$, we use $[t]$ to denote the set $\{1,2,\dots,t\}$.

\begin{lemma}[\citep{KimVa15}] \label{lem:fast-subset-convolution}
Let $i \le \log n$ be a positive integer and $f, g$ be integer-valued functions on subsets of $[n]$ of size $2^{i - 1}$.  Let $h$ be a function on subsets of $[n]$ of size $2^i$ defined by
\begin{align*}
h(S) = \sum_{T \subseteq S \atop |T| = 2^{i - 1}} f(T) \cdot g(S\setminus T).
\end{align*}
If each entry of $f$ and $g$ can be accessed in ${O(1)}$ time, then we can compute $h(S)$ for all $S$ of size $2^i$ in time $2^n \cdot n^{O(1)}$.
\end{lemma}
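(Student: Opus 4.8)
The plan is to compute $h$ via a single \emph{covering product} (OR-convolution) evaluated through the zeta and M\"obius transforms, exploiting the fact that $f$ and $g$ are supported only on sets of one fixed cardinality. First I would extend $f$ and $g$ to all of $2^{[n]}$ by setting them to $0$ on every set whose size differs from $2^{i-1}$. For a function $\phi$ on $2^{[n]}$, write its zeta transform as $\hat{\phi}(U) = \sum_{W \subseteq U} \phi(W)$; this is computed for all $U$ simultaneously by the standard subset-sum dynamic program that processes the ground-set elements one at a time, using $O(2^n \cdot n)$ additions, and the access assumption on $f,g$ suffices to initialize it. The inverse (M\"obius) transform $\phi(U) = \sum_{W \subseteq U} (-1)^{|U|-|W|}\hat{\phi}(W)$ is computed by the analogous dynamic program with the same running time.

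The core identity I would use is that the zeta transform of a covering product factorizes as a pointwise product. Define the covering product $c(S) = \sum_{A \cup B = S} f(A)\,g(B)$. Summing over all $S \subseteq U$ gives $\hat{c}(U) = \sum_{A \cup B \subseteq U} f(A)\,g(B) = \hat{f}(U)\cdot\hat{g}(U)$, so $c$ is recovered by M\"obius-inverting the pointwise product $\hat{f}\cdot\hat{g}$. The algorithm is therefore: compute $\hat f$ and $\hat g$; form $\hat{f}(U)\hat{g}(U)$ for every $U$; apply the M\"obius transform to obtain $c(S)$ for every $S$; and output $c(S)$ for all $S$ with $|S| = 2^i$. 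Every stage costs $2^n \cdot n^{O(1)}$ time, since the integers that arise count (sub)brackets and hence have polynomially many bits, making each arithmetic operation run in $n^{O(1)}$ time.

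The step I expect to be the crux is verifying that the covering product coincides with the desired subset convolution precisely at the target cardinality $|S| = 2^i$, which is exactly where the single-rank support of $f$ and $g$ pays off. For such an $S$, any pair $(A,B)$ contributing to $c(S)$ has $|A| = |B| = 2^{i-1}$ (otherwise $f(A)$ or $g(B)$ vanishes) and satisfies $A \cup B = S$; since $|A \cup B| = |A| + |B| - |A \cap B| = 2^i - |A \cap B|$ must equal $|S| = 2^i$, we conclude $A \cap B = \emptyset$, hence $B = S \setminus A$. Thus $c(S) = \sum_{A \subseteq S,\,|A| = 2^{i-1}} f(A)\,g(S\setminus A) = h(S)$, as required. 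This automatic enforcement of disjoint union at the top rank is the reason no full ranked-zeta-transform machinery is needed here: contamination from overlapping pairs can only occur at cardinalities strictly below $2^i$, which we simply discard. The remaining routine points are confirming the $O(2^n \cdot n)$ bound for each of the two transforms and bounding the bit-length of the intermediate values to secure the overall $2^n \cdot n^{O(1)}$ running time.
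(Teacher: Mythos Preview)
The paper does not give its own proof of this lemma: it is stated with a citation to \citet{KimVa15}, together with the remark that the proof relies on the fast subset convolution techniques of \citet{BjorklundHuKa07}. Your proof is correct and is precisely an instance of those techniques---zeta transform, pointwise product, M\"obius inversion---so in that sense it is the expected approach. Your additional observation that, because $f$ and $g$ are supported on a single rank, the covering product already agrees with the disjoint subset convolution at the top cardinality $2^i$ (and hence the full ranked-zeta machinery of Bj\"orklund--Husfeldt--Koivisto is unnecessary) is a valid and pleasant simplification. One minor caveat: your bit-length justification (``the integers count (sub)brackets'') appeals to the specific application rather than to the lemma as stated for arbitrary integer-valued $f,g$; a cleaner phrasing would be that intermediate values have bit-length $O(n)$ plus the input bit-length, which keeps arithmetic within the stated $n^{O(1)}$ factor.
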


\begin{proof}[Proof of \Cref{thm:alg}]
Let $[n]$ be the set of players.
For $i = 0, \dots, \log n$ and $j \in [n]$, let us define the function $f_i^j$ on subsets $S\subseteq[n]$ of size $2^i$ as follows:
\begin{itemize}
\item For $i = 0$, $f_0^j(S) = 1$ if $S = \{j\}$, and 0 otherwise.
\item For $i \ge 1$, $f_i^j(S) = 0$ if, for some power-of-two $\ell$ such that $\ell \leq s$ and $\ell \geq 2^{n - i}$, we have $|S \cap [\ell]| \ne \ell / 2^{n - i}$. Otherwise, let
\begin{align*}
f_i^j(S) = \sum_{k \in [n] \atop j \succ k} \sum_{T \subseteq S \atop |T| = 2^{i - 1}} f_{i - 1}^j(T) \cdot f_{i - 1}^k(S\setminus T).
\end{align*}
\end{itemize}
Intuitively, $f_i^j(S)$ captures the number of valid brackets for a subtournament with set of players $S$ (of size $2^i$) such that $j$ wins.
The condition $|S \cap [\ell]| = \ell / 2^{n - i}$ ensures that this subtournament contains exactly one of the top $2^{n-i}$ seeds, two of the top $2^{n-i+1}$ seeds, and so on.
Hence, $f_{\log n}^x([n])$ is exactly the number of valid brackets under which $x$ wins. Therefore, it suffices for us to show how to compute the values of these functions in time $2^n \cdot n^{O(1)}$. Our algorithm is described below. (Note that when we let $f^j_i(S)$ be some number, we actually mean storing each $f^j_i$ as an array and filling in the entry of the array corresponding to $S$.)
\begin{itemize}
\item For $i = 0$, let $f_0^j(\{k\}) = 1$ if $k = j$, and $0$ otherwise.
\item For $i = 1, \dots, \log n$:
\begin{itemize}
\item For $j \in [n]$:
\begin{itemize}
\item Start off with $f_i^j(S) = 0$ for all sets $S$ of size $2^i$.
\item For each $k \in [n]$ such that $j \succ k$:
\begin{itemize}
\item Use \Cref{lem:fast-subset-convolution} to compute $h$ with $f = f_{i - 1}^j$ and $g = f_{i - 1}^k$.
\item For every set $S$ of size $2^i$, increase $f_i^j(S)$ by $h(S)$.
\end{itemize}
\item For each set $S$ of size $2^i$, check whether $|S \cap [\ell]| = \ell / 2^{n - i}$ for every power-of-two $\ell$ such that $\ell \leq s$ and $\ell \geq 2^{n - i}$; if this fails, set $f_i^j(S) = 0$.
\end{itemize}
\end{itemize}
\end{itemize}
The number of pairs $(i,j)$ is $n\log n$, and the iteration of the for-loop corresponding to each pair $(i,j)$ runs in $2^n \cdot n^{O(1)}$ time by \Cref{lem:fast-subset-convolution}.
It follows that the entire algorithm runs in $2^n \cdot n^{O(1)}$ time, as desired.
\end{proof}

\section{Conclusion and Future Work}
\label{sec:conclusion}

In this paper, we have investigated the problem of fixing a knockout tournament in the ubiquitous setting where a subset of the players are designated as seeds.
Our results exhibit both similarities and differences in comparison to the setting without seeds.
On the one hand, the decision problem of whether a certain player can be made a tournament winner remains computationally hard, and manipulation is still feasible in the average case.
On the other hand, a number of structural conditions that guarantee that a player can win in the non-seeded setting cease to do so in the seeded setting.\footnote{Our negative results in \Cref{sec:structural} also apply to the conditions put forth by \citet[Thm.~2.1]{KimSuVa17}, since these conditions generalize both the superking and the ``king of high outdegree'' conditions.}

While the seed constraints that we studied in this work are both common and natural, some real-world tournaments employ more restrictive versions of constraints.
For example, in World Snooker Championships---which are competed among $32$ players, $16$ of whom are seeds---the bracket is set up to ensure that if the top four seeds all make it to the semifinals, the first seed will play against the fourth seed and the second against the third; analogous conditions apply to the quarterfinals as well as the round of $16$.\footnote{See wikipedia.org/wiki/2021\_World\_Snooker\_Championship.}
With these constraints, the only freedom in choosing the bracket is in pairing seeded players with unseeded players in the first round.
Another notable example is the seeding used in Grand Slam tennis tournaments, which involve $128$ players and $32$ seeds.
While the constraints for the semifinals and quarterfinals are the same as those that we studied, additional constraints are imposed in the rounds of $16$ and $32$.
For instance, in the round of $16$, the bracket must be set up so that seeds $9$--$12$ fall in different ``eighths'' from seeds $1$--$4$ \citep[p.~26]{Grandslam21}.
Studying the extent to which manipulation is still possible in such tournaments is an interesting avenue for future research.

\section*{Acknowledgments}

This work was partially supported by an NUS Start-up Grant.
We would like to thank the anonymous reviewers for their valuable comments.

\bibliographystyle{named}
\bibliography{ijcai22}

\end{document}